\theoremstyle{plain}
\newtheorem{Theorem}{Theorem}
\newtheorem{Lemma}{Lemma}
\newtheorem{Proposition}{Proposition}
\newtheorem{Corollary}{Corollary}
\newtheorem{Remark}{Remark}
\newtheorem{Condition}{Condition}
\numberwithin{equation}{section}
\begin{document}
\author
{
    S. A. Simonov
    \thanks
    {
        St. Petersburg Department of V. A. Steklov Institute of Mathematics of the Russian Academy of Sciences, 27 Fontanka, St. Petersburg, 191023 Russia;
        St. Petersburg State University, 7--9 Universitetskaya nab., St. Petersburg, 199034 Russia,
        sergey.a.simonov@gmail.com.
        The work was supported by the grant RFBR 18-31-00185mol\_a.
    }
}
\title{The wave model of the Sturm--Liouville operator on an interval}
\maketitle
\begin{abstract}
    In the paper we construct the wave functional model of a symmetric
    restriction of the regular Sturm-Liouville operator on an interval. The
    model is based upon the notion of the wave spectrum and is constructed
    according to an abstract scheme which was proposed earlier. The result of
    the construction is a differential operator of the second order on an interval,
    which differs from the original operator only by a simple transformation.
\end{abstract}

\noindent{\bf Keywords:} functional model of a symmetric operator, Green's system, wave spectrum, inverse problem.

\noindent{\bf AMS MSC:} 34A55, 47A46, 06B35.

\section*{Introduction}\label{sec Introduction}
In the work \cite{Belishev-2013} the notion of the wave spectrum of a symmetric semibounded from below operator was introduced. The wave spectrum is constructed as a topological space determined by the operator. In the same work the wave spectrum was studied for the Laplace operator on a compact manifold and it was established that in the general situation one can introduce a metric on the wave spectrum so that it becomes isometric to the original manifold. In \cite{Belishev-Simonov-2017} a scheme of construction of a functional model of such an operator was proposed, which is called {\it the wave model} and is based on the notion of the wave spectrum. The space of functions on the wave spectrum is taken as the model space. The graph of the model operator is recovered using the method of boundary control, on which the construction of the wave spectrum also relies. This scheme was realized in \cite{Belishev-Simonov-2017} for the positive-definite Schr\"odinger operator on the half-line in the limit point case. To be precise, the wave model was constructed for a symmetric restriction of such an operator with defect indices $(1,1)$.

In the present paper we construct the wave model of a symmetric positive-definite operator with defect indices $(2,2)$, namely, of the symmetric restriction of the regular Sturm--Liouville operator defined by the differential expression $-\frac{d^2}{dx^2}+q(x)$ on the interval $(0,l)$ with the boundary conditions $u(0)=u'(0)=u(l)=u'(l)=0$. The potential $q$ is supposed to be smooth and such that the operator is positive-definite. In course of construction we also refine and develop the abstract scheme of the wave model.

The paper consists of two parts. In the first, abstract, part we give the definition of the wave spectrum and describe the scheme of the wave model construction. Trying to keep certain level of generality, we formulate a number of conditions, which a symmetric operator should satisfy and under which the wave model is constructed. Conditions are formulated in rather abstract terms, thus one can check them only constructing the model of some particular operator. The second part of the paper is devoted to realization of this abstract scheme for the Sturm--Liouville operator on an interval. We explicitly describe objects that were defined in the first part and directly check all the conditions.

An important feature of the wave functional model is that it turns out to be almost identical to the original operator. This happens for the example considered earlier \cite{Belishev-Simonov-2017}
and in our case. The inverse problem data (spectral, dynamical) in some cases allows to construct certain {\it ``auxiliary model''}, i.e., a model space and an operator acting on it which is unitarily equivalent to the original operator. In this sense one can distinguish objects that are available to the {\it ``outer observer''} (those which can be obtained knowing the inverse problem data) and available to the {\it ``inner observer''} (those that can be obtained knowing the original, the solution of the inverse problem). Knowing the auxiliary model, the ``outer observer'' can construct its wave model, from which it is easy to recover the original. In our examples the wave model is a differential operator. From the coefficients of this operator one can explicitly obtain the potential of the original operator. In the case of the regular Sturm--Liouville operator the potential can be recovered up to reflection from the middle point of the interval $(0,l)$.

The results of this work were announced in \cite{Simonov-2017}, where a brief description of this construction was given.

\section{The abstract scheme}
\subsection{The operator $L_0$}\label{ssec Operator $L_0$}
Consider a closed symmetric linear operator $L_0$ in a separable Hilbert space $\cal H$, and let this operator be positive-definite: there exists $\varkappa>0$ such that $(L_0 u,u)\geqslant \varkappa \|u\|^2$ for every $u\in{\rm Dom\,}L_0$. Denote by $L$ the Friedrichs self-adjoint extension of the operator $L_0$, \cite{Birman-Solomyak-1980}. For every $u\in{\rm Dom\,}L$ one has $(Lu,u)\geqslant\varkappa \|u\|^2$,  hence   the bounded inverse operator $L^{-1}$ exists.

\subsection{The Green's system}\label{ssec System Gr_L_0}
Let $A$ be an operator in $\cal H$, $\cal B$ a Hilbert space, $\Gamma_1$ and $\Gamma_2$ be linear operators acting from $\cal H$ to $\cal B$. Let the following conditions hold:
$$
    \overline{{\rm Dom\,}A}={\cal H},
    \quad{\rm Dom\,}A\subseteq{\rm Dom\,}\Gamma_1\cap{\rm Dom\,}\Gamma_2,
    \quad\overline{{\rm Ran\,}\Gamma_1+{\rm Ran\,}\Gamma_2} = \cal B.
$$
The collection ${\mathfrak G}=\{{\cal H}, {\cal B}; A, \Gamma_1,\Gamma_2\}$ is called the {\it Green's system}, if the equality
\begin{equation}\label{Green Formula}
    (Au,v)_{\cal H}-(u,Av)_{\cal H}=(\Gamma_1u, \Gamma_2 v)_{\cal B}-(\Gamma_2u, \Gamma_1 v)_{\cal B}
\end{equation}
(the {\it Green's formula}) holds for every $u, v \in {\rm Dom\,} A$, \cite{Kochubei-1975,Derkach-Malamud-1995,Ryzhov-2007}. The space $\cal H$ is called the {\it inner space}, $\cal B$ the {\it space of boundary values}, $A$ the {\it basic operator}, $\Gamma_1,\Gamma_2$ the {\it boundary operators}.

There is a class of Green's systems which canonically corresponds to the class of operators $L_0$ that we consider. Denote
$$
    {\cal K}\,:=\,{\rm Ker\,} L_0^*,
$$
let $P_K$ be the orthogonal projection on the subspace ${\cal K}$ of $\cal H$, $\mathbb O$ be the zero operator in $\cal H$, $\mathbb I$ be the identity operator. Let
\begin{equation}\label{Ga1Ga2}
    \Gamma_1\,:=\,L^{-1}L_0^*-{\mathbb I}\,, \quad \Gamma_2\,:=\,P_K L_0^*.
\end{equation}
Then the collection ${\mathfrak G}_{L_0}:=\{{\cal H}, {\cal K};\,L_0^*, \Gamma_1, \Gamma_2\}$ forms a Green's system \cite{Belishev-Demchenko-2012}. Such a system is related to the {\it Vishik's decomposition} for the operator $L_0$, which has the form
\begin{equation}\label{Vishik general}
    {\rm Dom\,}L_0^*={\rm Dom\,}L_0 \overset{.}+L^{-1}{\cal K}\overset{.}+{\cal K}
\end{equation}
($\overset{.}+$ denotes the direct sum of linear sets). The boundary operators can be written in terms of this decomposition as follows \cite{Vishik-1952}: if $u\in {\rm Dom}L_0^*$ is represented in the form
\begin{equation}\label{Vishik concrete}
    u=u_0+L^{-1}g_u + h_u,
\end{equation}
where $u_0\in{\rm Dom}\, L_0,g_u, h_u \in {\mathcal K}$, then
\begin{equation}\label{**}
    \Gamma_1 u = -h_u, \quad \Gamma_2 u = g_u.
\end{equation}

\subsection{The system with boundary control}\label{ssec System alpha}
Consider the following problem, which corresponds to the Green's system ${\mathfrak G}_{L_0}$:
\begin{align}
    \label{alpha1} & u_{tt}+L_0^*u = 0,  && t>0,
    \\
    \label{alpha2} & u|_{t=0}=u_t|_{t=0}=0, &&
    \\
    \label{alpha3}& \Gamma_1 u = h\,, && t\geqslant 0,
\end{align}
where $h=h(t)$, a $\cal K$-valued function, is called the {\it boundary control}, and the ${\cal H}$-valued function $u(t)=u^h(t)$ is unknown. In the control theory $u^h(\cdot)$ is called the {\it trajectory}, $u^h(t)$ the {\it state} of the system at the moment $t$; we will call $u^h$ the {\it wave}. Denote the system (\ref{alpha1})--(\ref{alpha3}) by ${\alpha_{L_0}}$.

The problem (\ref{alpha1})--(\ref{alpha3}) has a solution, \cite{Belishev-Demchenko-2012}, if the control $h$ belongs to the class
\begin{equation}\label{linal cal M}
    {\cal M}:=\{h \in C^{\infty}\left([0,\infty); {\cal K}\right):{\rm supp\,}h \subset (0,\infty)\}.
\end{equation}
This solution can be written in the form
\begin{equation}
    \label{weak solution u^f}
    u^h(t)=-h(t)+\int_0^t L^{-\frac12}\,\sin\left[(t-s)L^{\frac12}\right]\,h_{tt}(s)\,ds\,,\qquad t \geqslant 0,
\end{equation}
it belongs to $C^{\infty}\left([0,\infty); {\cal H}\right)$ and vanishes near zero. We will call such $u^h$ {\it classical solutions} or {\it smooth waves}.

The set of states of the system $\alpha_{L_0}$
\begin{equation}\label{U^t}
    {\cal U}^t_{L_0}:=\{u^h(t),h\in{\cal M}\}\subseteq {\rm Dom\,}L_0^*
\end{equation}
is called the {\it reachable set} at the time $t\geqslant0$. It is easy to see that ${\cal U}_{L_0}^t$ grows with $t$. The set
\begin{equation}\label{U}
    {\cal U}_{L_0}:=\bigcup_{t>0}{\cal U}^t_{L_0}
\end{equation}
is called the {\it total reachable set} of the system $\alpha_{L_0}$, and its orthogonal complement
$$
    {\cal D}_{L_0}\,:=\,{\cal H} \ominus \overline{\cal U}_{L_0}
$$
is called the {\it defect subspace} of the system $\alpha_{L_0}$. Linear sets ${\cal U}^t_{L_0}$ and ${\cal U}_{L_0}$ are invariant under $L_0^*$: let $T\geqslant0$ and $u=u^h(T)\in {\cal U}_{L_0}^T$, then
\begin{align*}
    &L_0^*u^h(T) \overset{(\ref{alpha1})}=-u^h_{tt}(T)\overset{(\ref{weak solution u^f})}=-u^{h_{tt}}(T)\in{\cal U}_{L_0}^T,
    \\
    &u^h(T) \overset{(\ref{alpha2})}=J^2[u^h_{tt}](T)\overset{(\ref{weak solution u^f})}=u^{J^2h}_{tt}(T)\overset{(\ref{alpha1})}=-L_0^*u^{J^2h}(T)\in L_0^*{\cal U}_{L_0}^T\,,
\end{align*}
where $J: u\mapsto\int_0^tu(s)ds$. Therefore $L_0^*{\cal U}_{L_0}= {\cal U}_{L_0}$.

The system $\alpha_{L_0}$ is called {\it controllable}, if $\overline{{\cal U}}_{L_0}={\cal H}$. The following fact is known \cite{Belishev-Demchenko-2012}.

\begin{Proposition}
    Controllability of the system ${\alpha_{L_0}}$ is equivalent to the fact that the operator $L_0$ is completely non-selfadjoint.
\end{Proposition}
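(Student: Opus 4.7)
The plan is to identify the defect subspace $\mathcal{D}_{L_0}$ explicitly using the solution formula (\ref{weak solution u^f}), recognize it as the orthogonal complement of the $L$-cyclic subspace generated by $\mathcal{K}$, and then match this with the maximal reducing subspace of $L_0$ on which $L_0$ acts selfadjointly. Controllability means $\mathcal{D}_{L_0}=\{0\}$, so the two notions coincide.

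First, I would pair $u^h(t)$ against an arbitrary $v\in\mathcal{H}$ using (\ref{weak solution u^f}). Since $L^{-1/2}\sin[(t-s)L^{1/2}]$ is bounded and selfadjoint and $h(t),h_{tt}(s)\in\mathcal{K}$,
\[
(u^h(t),v)_{\mathcal{H}}=-(h(t),P_K v)_{\mathcal{K}}+\int_0^t\bigl(h_{tt}(s),\,P_K L^{-1/2}\sin[(t-s)L^{1/2}]v\bigr)_{\mathcal{K}}\,ds.
\]
Letting $t>0$ and $h\in\mathcal{M}$ vary — the pairs $(h(t),h_{tt}|_{(0,t)})$ cover enough of $\mathcal{K}\times C^\infty_c((0,t);\mathcal{K})$ to test against arbitrary continuous $\mathcal{K}$-valued integrands — one concludes that $v\in\mathcal{D}_{L_0}$ if and only if the $\mathcal{K}$-valued function $\tau\mapsto P_K L^{-1/2}\sin(\tau L^{1/2})v$ vanishes on $[0,\infty)$ (differentiation at $\tau=0$ then automatically gives $P_K v=0$).

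Second, by the spectral theorem for $L$ (with $\sigma(L)\subset[\varkappa,\infty)$), pairing this identity with a fixed $k\in\mathcal{K}$ gives $\int_{\varkappa}^{\infty}\lambda^{-1/2}\sin(\tau\sqrt{\lambda})\,d(E_\lambda v,k)=0$ for every $\tau\geqslant 0$. A standard odd-extension argument combined with injectivity of the Fourier transform on compactly--supported-from-below measures forces $d(E_\lambda v,k)=0$ for every $k\in\mathcal{K}$. Therefore
\[
\mathcal{D}_{L_0}=\bigl(\overline{\mathrm{span}}\{\phi(L)k:k\in\mathcal{K},\ \phi\text{ bounded Borel}\}\bigr)^{\perp},
\]
which is a reducing subspace of $L$ and of $L^{-1}$.

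Third, I would match $\mathcal{D}_{L_0}$ with the maximal reducing subspace of $L_0$ on which $L_0$ is selfadjoint. If $\mathcal{N}$ is any nontrivial closed subspace reducing $L_0$ such that $L_0|_{\mathcal{N}}$ is selfadjoint, then $L_0|_{\mathcal{N}}$ inherits positive-definiteness and hence bijectivity on $\mathcal{N}$, which forces $\mathcal{N}\perp\mathcal{K}$; the Friedrichs construction respects the splitting $\mathcal{N}\oplus\mathcal{N}^\perp$ (its closed quadratic form splits), so $\mathcal{N}$ also reduces $L$, giving $\mathcal{N}\perp\phi(L)\mathcal{K}$ for every Borel $\phi$ and thus $\mathcal{N}\subseteq\mathcal{D}_{L_0}$. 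For the converse, take $\mathcal{N}:=\mathcal{D}_{L_0}$ and use the Vishik decomposition (\ref{Vishik general})--(\ref{**}): since $\mathcal{D}_{L_0}$ is orthogonal to both $\mathcal{K}$ and $L^{-1}\mathcal{K}$, projecting any $u\in\mathrm{Dom}\,L_0^*$ onto $\mathcal{D}_{L_0}$ annihilates the summands $L^{-1}g_u$ and $h_u$, so the projection lies in $\mathrm{Dom}\,L_0$ and $L_0|_{\mathcal{D}_{L_0}}=L|_{\mathcal{D}_{L_0}}$, which is selfadjoint. The hardest step is exactly this last verification — confirming that the spectrally defined subspace $\mathcal{D}_{L_0}$ actually reduces the closed \emph{symmetric} operator $L_0$ (and not merely its Friedrichs extension $L$), which rests on the precise form (\ref{Vishik concrete})--(\ref{**}) of $\Gamma_1,\Gamma_2$ and on the compatibility of the orthogonal splitting $\mathcal{H}=\overline{\mathcal{U}}_{L_0}\oplus\mathcal{D}_{L_0}$ with the triple direct sum in (\ref{Vishik general}).
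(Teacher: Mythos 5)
The paper does not prove this Proposition at all --- it quotes it from \cite{Belishev-Demchenko-2012} --- so your attempt can only be measured against the standard argument, and your overall route is exactly that argument: identify $\overline{\mathcal U}_{L_0}$ as the smallest $L$-reducing subspace containing $\mathcal K$, and match its orthogonal complement $\mathcal D_{L_0}$ with the maximal reducing subspace on which $L_0$ is selfadjoint. Your spectral step (odd extension, injectivity of the Fourier transform of a finite measure) and direction one of the matching (positive definiteness forces $\mathcal N\perp\mathcal K$, the Friedrichs form splits along a reducing decomposition) are fine. Two steps, however, are not justified as written.

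First, the pairs $(h(t),h_{tt}|_{(0,t)})$ are not free: since $h\in\mathcal M$ vanishes near $t=0$, one has $h(t)=\int_0^t(t-s)h_{tt}(s)\,ds$, so you cannot vary $h(t)$ and $h_{tt}$ independently to kill the two terms separately. What free variation of $g=h_{tt}\in C^\infty_c((0,t);\mathcal K)$ actually yields is $P_KL^{-1/2}\sin(\tau L^{1/2})v=\tau P_Kv$ for all $\tau>0$; you then need the extra (easy) observation that the left-hand side is bounded in $\tau$ by $\varkappa^{-1/2}\|v\|$, which forces $P_Kv=0$ and hence the vanishing you claim. Second, and more seriously, the inference ``projecting $u\in\mathrm{Dom}\,L_0^*$ onto $\mathcal D_{L_0}$ annihilates $L^{-1}g_u$ and $h_u$, so the projection lies in $\mathrm{Dom}\,L_0$'' is a non sequitur: it only shows $P_{\mathcal D_{L_0}}u=P_{\mathcal D_{L_0}}u_0$ with $u_0\in\mathrm{Dom}\,L_0$, and orthogonal projections do not preserve operator domains (nor does the Vishik decomposition, a direct but non-orthogonal sum, commute with $P_{\mathcal D_{L_0}}$). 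The missing lemma is $\mathcal D_{L_0}\cap\mathrm{Dom}\,L\subseteq\mathrm{Dom}\,L_0$: for $d\in\mathcal D_{L_0}\cap\mathrm{Dom}\,L$ write $d=d_0+L^{-1}g$ with $d_0\in\mathrm{Dom}\,L_0$, $g\in\mathcal K$ (the $\mathcal K$-summand of an element of $\mathrm{Dom}\,L$ vanishes because $\ker L=\{0\}$), and for every $k\in\mathcal K$ compute $(g,k)=(Ld,k)-(L_0d_0,k)=(Ld,k)-(d_0,L_0^*k)=(Ld,k)=0$, since $Ld\in\mathcal D_{L_0}\perp\mathcal K$ (here you use that $\mathcal D_{L_0}$ reduces $L$); hence $g=0$ and $d=d_0\in\mathrm{Dom}\,L_0$. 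Combining this with $P_{\mathcal D_{L_0}}\mathrm{Dom}\,L\subseteq\mathrm{Dom}\,L$ shows that $\mathcal D_{L_0}$ reduces $L_0$ and that the part of $L_0$ in $\mathcal D_{L_0}$ is the (selfadjoint) part of $L$ there, which is what your conclusion requires; without this supplement the ``hardest step'' you flag remains unproved.
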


The restriction of the operator $L_0^*$ to the linear set of smooth waves ${\cal U}_{L_0}\subseteq{\rm Dom}\,L_0^*$ is not necessarily a closed operator. Its closure $\overline{L_0^*|_{{\cal U}_{L_0}}}\subseteq L_0^*$ is called the {\it wave part} of the operator $L_0^*$. If the operator $L_0$ is completely non-selfadjoint, the question arises whether the operator $L_0^*$ coincides with its wave part. This happens for the examples that we know, however, we do not have a proof of the general fact.

\subsection{The wave spectrum}\label{sec Wave spectrum}
The functional model of the operator $L_0$ that we construct is based on the wave spectrum of the operator. For its definition we use notions of lattice theory.

{\it Lattice} is a partially ordered set every two elements $p,q$ of which have the least upper bound ${\rm sup}\{p,q\}=p \vee q$ (the least element of the set of all upper bounds) and the greatest lower bound ${\rm inf}\{p,q\}=p\wedge q$ (the greatest element of the set of all lower bounds). A lattice is called {\it complete}, if every its subset has the least upper and the greatest lower bounds. In a complete lattice there always exist the least and the greatest elements.

Let $\cal P$ and $\cal Q$ be partially ordered sets, $i$ be a map from $\cal P$ to $\cal Q$. The map $i$ is called {\it isotonic}, if $p_1\leqslant p_2$ in $P$ implies $i(p_1)\leqslant i(p_2)$ in $Q$, \cite{Birkhoff-1984}. We call {\it isotony} a family of maps $\{i^t\}_{t\geqslant0}$ from $P$ to $Q$, if $p_1\leqslant p_2$ and $t_1\leqslant t_2$ implies $i^{t_1}(p_1)\leqslant i^{t_2}(p_2)$.

Let $P=Q=\mathfrak L$ be a complete lattice and $O_{\mathfrak L}$ be its least element. Then an isotony $I^t$ is called an {\em isotony of the lattice $\mathfrak L$}, if $I^0$ is the identity map in $\mathfrak L$ and $I^t(O_{\mathfrak L})=O_{\mathfrak L}$ for every $t\geqslant 0$.

Let a partially ordered set $\cal P$ contain the least element $O_{\cal P}$. An element $p\neq O_{\cal P}$ is called an {\it atom} of $\cal P$, if there is no element $p\,'\in\cal P$ such that $O_{\cal P}<p\,'<p$.

Let $\mathfrak L$ be a complete lattice, $O_{\mathfrak L}$ be its least element, $E_{\mathfrak L}$ be its greatest element. If for every $p\in\mathfrak L$ there exists an element $p\,'\in\mathfrak L$ such that $p\vee p\,'=E_{\mathfrak L}$, $p\wedge p\,' = O_{\mathfrak L}$ (the {\it complement}), then $\mathfrak L$ is called a {\it lattice with complements}.

\subsubsection{The lattice of subspaces}
We will work with lattices and isotonies of a special kind.

The set ${{\mathfrak L}(\cal H)}$ of all subspaces of a Hilbert space $\cal H$ with the partial order $\subseteq$ forms a complete lattice with complements: it is easy to check that $\mathcal G_1\vee\mathcal G_2=\overline{\mathcal G_1+\mathcal G_2}$ and $\mathcal G_1\wedge\mathcal G_2={\mathcal G_1}\cap{\mathcal G_2}$ for $\mathcal G_1,\mathcal G_2\in\mathfrak L(\mathcal H)$, $\{0\}$ is the least element, $\cal H$ is the greatest element, $\mathcal G^{\bot}$ is the complement for $\mathcal G\in\mathfrak L(\cal H)$.

Let us call $\mathfrak L\subset\mathfrak L(\cal H)$ a {\it sublattice} of the lattice $\mathfrak L(\mathcal H)$ {\it with complements}, if $\mathfrak L$ contains $\{0\}$, $\cal H$, $\mathcal G_1\vee\mathcal G_2$, $\mathcal G_1\wedge\mathcal G_2$ for every $\mathcal G_1,\mathcal G_2\in\mathfrak L$ and $\mathcal G^{\bot}$ for every $\mathcal G\in\mathfrak L$. For every subset $\mathfrak M\subset\mathfrak L(\mathcal H)$ there exists the minimal sublattice with complements $\mathfrak L_{\mathfrak M}$ in $\mathfrak L(\mathcal H)$, which contains $\mathfrak M$. If $I^t$ is an isotony of the lattice $\mathfrak L(\mathcal H)$, then there also exists the minimal lattice with complements $\mathfrak L_{\mathfrak M}^{I}$ in $\mathfrak L(\mathcal H)$, which contains $\mathfrak M$ and is invariant under $I$: for every $\mathcal G\in\mathfrak L_{\mathfrak M}^{I}$ and $t\geqslant 0$ one has $I^t(\mathcal G)\in\mathfrak L_{\mathfrak M}^{I}$, \cite{Belishev-Simonov-2017}.

One can naturally define a topology on the lattice of subspaces $\mathfrak L(\mathcal H)$. A sequence $\{\mathcal G_n\}_{n\in\mathbb N}$ from $\mathfrak L(\mathcal H)$ converges to $\mathcal G\in\mathfrak L(\mathcal H)$ as $n\rightarrow\infty$, if the corresponding projections converge in the strong sense: $P_{\mathcal G_n}\overset{s}\to P_{\mathcal G}$. Note that the strong operator topology, restricted to orthogonal projections, satisfies the first axiom of countability and can be described in terms of converging sequences, \cite{Kim-2006}.

Let $\mathfrak F(\mathcal H)$ denote the set of functions from $[0,\infty)$ to ${{\mathfrak L}(\mathcal H)}$ with the pointwise partial order: $f_1\leqslant f_2$, if $f_1(t)\leqslant f_2(t)$ in $\mathfrak L(\mathcal H)$ (i.e., $f_1(t)\subseteq f_2(t)$) for every $t\geqslant 0$. Then the lattice operations will also be pointwise:
\begin{align*}
    &(f_1\vee f_2)(t)=f_1(t)\vee f_2(t),
    \\
    &(f_1\wedge f_2)(t)=f_1(t)\wedge f_2 (t),
    \\
    &(f^\bot)(t)=(f(t))^\bot.
\end{align*}
Strong operator topology generates on $\mathfrak F(\mathcal H)$ the product topology (the topology of pointwise convergence), which does not satisfy the first axiom of countability and can be described in terms of converging nets instead of sequences. It turns out that the objects that we work with do not require a topology on $\mathfrak F(\mathcal H)$ and that it is possible to deal with the operation of sequential closure (topology corresponding to such an operation can be not unique). There exists a version of our construction of the wave model based on the product topology in $\mathfrak F(\mathcal H)$. For all the examples known to us, both versions eventually lead to the same construction (because the wave spectra coincide).

Let us denote by $I\mathfrak L(\mathcal H):=\{I^t(\mathcal G), \mathcal G\in\mathfrak L(\mathcal H) \}$ the set of isotonic $\mathfrak L(\mathcal H)$-valued functions, obtained by applying the isotony $I$ to the elements of the lattice $\mathfrak L(\mathcal H)$. We denote by $[I\mathfrak L(\mathcal H)]_{\rm seq}$ the sequential closure of this set in $\mathfrak F(\mathcal H)$.

\begin{Lemma}\label{Lem iso}
    Let $I$ be an isotony of the lattice $\mathfrak L(\mathcal H)$. Then the elements of $[I{\mathfrak L(\mathcal H)}]_{\rm seq}$ are isotonic functions.
\end{Lemma}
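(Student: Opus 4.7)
The plan is to reduce the lemma to two separate observations: first that the generating set $I\mathfrak L(\mathcal H)$ already consists of isotonic functions, and second that pointwise (strong-operator) limits of sequences of isotonic $\mathfrak L(\mathcal H)$-valued functions are again isotonic. Once these are in place, the result follows: if the sequential closure is taken as a single step this is immediate, and if it is iterated (possibly transfinitely) then the stability property will propagate the isotonicity through every stage by induction.

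For the first observation, fix $\mathcal G\in\mathfrak L(\mathcal H)$ and look at $f_{\mathcal G}(t):=I^t(\mathcal G)$. Applying the definition of isotony with $p_1=p_2=\mathcal G$ and $t_1\leqslant t_2$ yields $f_{\mathcal G}(t_1)=I^{t_1}(\mathcal G)\leqslant I^{t_2}(\mathcal G)=f_{\mathcal G}(t_2)$, so $f_{\mathcal G}\in\mathfrak F(\mathcal H)$ is isotonic.

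For the second observation, let $\{f_n\}\subset\mathfrak F(\mathcal H)$ be isotonic with $f_n\to f$, i.e., $P_{f_n(t)}\overset{s}\to P_{f(t)}$ for each $t\geqslant 0$. Fix $t_1\leqslant t_2$. Isotonicity of $f_n$ gives $f_n(t_1)\subseteq f_n(t_2)$, hence $P_{f_n(t_2)}P_{f_n(t_1)}=P_{f_n(t_1)}$. For an arbitrary $x\in\mathcal H$ I would pass to the limit on both sides: the right-hand side tends to $P_{f(t_1)}x$ by assumption, while for the left-hand side the standard estimate
\begin{equation*}
\bigl\|P_{f_n(t_2)}P_{f_n(t_1)}x-P_{f(t_2)}P_{f(t_1)}x\bigr\|\leqslant\bigl\|P_{f_n(t_1)}x-P_{f(t_1)}x\bigr\|+\bigl\|(P_{f_n(t_2)}-P_{f(t_2)})P_{f(t_1)}x\bigr\|
\end{equation*}
uses the uniform bound $\|P_{f_n(t_2)}\|\leqslant 1$ to obtain convergence to $P_{f(t_2)}P_{f(t_1)}x$. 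Thus $P_{f(t_2)}P_{f(t_1)}x=P_{f(t_1)}x$ for every $x$, which means $\operatorname{Ran}P_{f(t_1)}\subseteq\operatorname{Ran}P_{f(t_2)}$, i.e., $f(t_1)\leqslant f(t_2)$ in $\mathfrak L(\mathcal H)$. Hence $f$ is isotonic.

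Neither step is difficult; the only place where one has to be careful is combining strong convergence of $P_{f_n(t_2)}$ with strong convergence of $P_{f_n(t_1)}$ inside the product $P_{f_n(t_2)}P_{f_n(t_1)}x$, since strong operator convergence is not jointly continuous in general. The uniform bound on the norms of projections is precisely what rescues the argument, and this is the one spot I would want to write out explicitly. After that, the proof closes by noting that the class of isotonic functions in $\mathfrak F(\mathcal H)$ is sequentially closed, and therefore contains $[I\mathfrak L(\mathcal H)]_{\rm seq}$.
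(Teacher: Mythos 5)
Your proof is correct and follows essentially the same route as the paper: both pass to the limit in the identity $P_{f_n(t_2)}P_{f_n(t_1)}=P_{f_n(t_1)}$, using strong convergence together with the uniform bound $\|P_{f_n(t_2)}\|\leqslant 1$ to handle the product, and conclude $P_{f(t_2)}P_{f(t_1)}=P_{f(t_1)}$, i.e.\ $f(t_1)\subseteq f(t_2)$. Your extra remarks (isotonicity of the generators $t\mapsto I^t(\mathcal G)$ and stability under iterated sequential closure) are sound but only make explicit what the paper's one-step argument already uses.
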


\begin{proof}
Let $f\in[I{\mathfrak L(\mathcal H)}]_{\rm seq}$. Then there exists a sequence $\{\mathcal G_n\}_{n\in\mathbb N}$ such that for every $t\geqslant 0$ one has
$$
    f(t)=\mathfrak L(\mathcal H)-\lim\limits_{n\rightarrow\infty}I^t(\mathcal G_n).
$$
Let $t_1\leqslant t_2$. Then
$$
    P_{f(t_1)}=s-\lim\limits_{n\rightarrow\infty}P_{I^{t_1}(\mathfrak G_n)},\quad P_{f(t_2)}=s-\lim\limits_{n\rightarrow\infty}P_{I^{t_2}(\mathfrak G_n)}.
$$
From the inclusion $I^{t_1}(\mathcal G_n)\subseteq I^{t_2}(\mathcal G_n)$ and the equality $P_{I^{t_2}(\mathcal G_n)}P_{I^{t_1}(\mathcal G_n)}=P_{I^{t_1}(\mathcal G_n)}$ we obtain for every $x$ that
\begin{multline*}
    P_{I^{t_2}(\mathcal G_n)}P_{I^{t_1}(\mathcal G_n)}x
    =P_{I^{t_2}(\mathcal G_n)}(P_{I^{t_1}(\mathcal G_n)}-P_{f(t_1)})x
    \\
    +(P_{I^{t_2}(\mathcal G_n)}-P_{f(t_2)})P_{f(t_1)}x
    +P_{f(t_2)}P_{f(t_1)}x.
\end{multline*}
Owing to convergence and to boundedness of the norms, $\|P_{I^{t_2}(\mathcal G_n)}\|=1$, we get $P_{I^{t_2}(\mathcal G_n)}P_{I^{t_1}(\mathcal G_n)}\overset{s}\to P_{f(t_2)}P_{f(t_1)}$ as $n\rightarrow\infty$    and $P_{f(t_2)}P_{f(t_1)}=P_{f(t_1)}$, which implies the inclusion $f(t_1)\subseteq f(t_2)$.
\end{proof}

Define the ``balls'' in the set $[I{\mathfrak L(\mathcal H)}]_{\rm seq}$
$$
    B_r(f)=\{g\in[I{\mathfrak L(\mathcal H)}]_{\rm seq}: \exists t>0: g(t)\neq0,g(t)\subset f(r)\}.
$$

\begin{Lemma}\label{Lem base}
Let $I$ be an isotony of the lattice $\mathfrak L(\mathcal H)$. Then the system of sets $\{B_r(f),\ f\in[I{\mathfrak L(\mathcal H)}]_{\rm seq},\ r>0\}$ is a base of some topology on $[I{\mathfrak L(\mathcal H)}]_{\rm seq}$.
\end{Lemma}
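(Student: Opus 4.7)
The plan is to verify the two standard criteria for a collection of sets $\mathcal B$ to be a base of some topology on a set: (B1) every point of the set lies in some member of $\mathcal B$, and (B2) for any $B_1,B_2\in\mathcal B$ and any $g\in B_1\cap B_2$ there is $B_3\in\mathcal B$ with $g\in B_3\subseteq B_1\cap B_2$. Both checks will be direct and will use Lemma~\ref{Lem iso} as the key structural input.

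Condition (B1) is immediate: given $g\in[I{\mathfrak L(\mathcal H)}]_{\rm seq}$ with $g(t_0)\neq 0$ for some $t_0>0$, I choose $f:=g$ and $r:=t_0$ and observe that $g(t_0)\subseteq g(t_0)=f(r)$, so $g\in B_{t_0}(g)$. (The identically-zero function, if one insists on regarding it as a point, lies in no ball, but this does not affect the base structure.)

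The main step is (B2). Given $g\in B_{r_1}(f_1)\cap B_{r_2}(f_2)$, the definition of the balls supplies $t_1,t_2>0$ with $g(t_j)\neq 0$ and $g(t_j)\subseteq f_j(r_j)$ for $j=1,2$. I would take $f_3:=g$ and $r_3:=\min\{t_1,t_2\}$. By Lemma~\ref{Lem iso} the function $g$ is itself isotonic, so $g(r_3)\subseteq g(t_j)\subseteq f_j(r_j)$ for both $j$; moreover $g(r_3)$ agrees with whichever of $g(t_1),g(t_2)$ has the smaller argument and is therefore nonzero, giving $g\in B_{r_3}(g)$ with $t=r_3$ as witness. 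For an arbitrary $h\in B_{r_3}(g)$, pick $t>0$ with $h(t)\neq 0$ and $h(t)\subseteq g(r_3)$; the chain just established yields $h(t)\subseteq f_j(r_j)$ for $j=1,2$, so $h\in B_{r_1}(f_1)\cap B_{r_2}(f_2)$, and thus $B_{r_3}(f_3)\subseteq B_{r_1}(f_1)\cap B_{r_2}(f_2)$.

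The only delicate point is the choice of $r_3$: it must be small enough for $g(r_3)$ to sit inside both $f_j(r_j)$ via isotonicity of $g$, yet not so small that $g(r_3)$ collapses to $\{0\}$, which would prevent $g$ itself from lying in the new ball. The prescription $r_3=\min\{t_1,t_2\}$ is the unique natural way to balance these two pressures, and it is precisely Lemma~\ref{Lem iso} that makes the monotonicity argument legitimate — without it, there is no reason to expect $g(r_3)\subseteq g(t_j)$.
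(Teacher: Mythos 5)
Your argument is correct and follows essentially the same route as the paper's proof: take the ball centered at $g$ itself with radius $r=\min\{t_1,t_2\}$ and use Lemma~\ref{Lem iso} (isotonicity of elements of $[I{\mathfrak L(\mathcal H)}]_{\rm seq}$) to push $g(r)$ inside both $f_1(r_1)$ and $f_2(r_2)$, noting that $g(r)$ coincides with one of $g(t_1),g(t_2)$ and is therefore nonzero. You are in fact slightly more explicit than the paper on the covering condition and on the membership $g\in B_{r}(g)$, but the substance of the argument is the same.
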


\begin{proof}
Let us check the condition for a family of sets to be a base of topology: let $f\in B_{r_1}(f_1)\cap B_{r_2}(f_2)$.
Prove that there exists a radius $r$ such that $B_{r}(f)\subseteq B_{r_1}(f_1)\cap B_{r_2}(f_2)$.
There exist $t_1$ and $t_2$ such that $f(t_1),f(t_2)\neq\{0\}$, $f(t_1)\subseteq f_1(r_1)$, $f(t_2)\subseteq f_2(r_2)$.
Since, by Lemma \ref{Lem iso}, $f$ is an isotonic function, $f(r)\subseteq f_1(r_1)\cap f_2(r_2)$ for $r:={\rm min}\{t_1,t_2\}$, while $f(r)\neq\{0\}$.
Then for every $g\in B_r(f)$ there exists $t_g>0$ such that $g(t_g)\neq\{0\}$ and $g(t_g)\subseteq f(r)\subseteq f_1(r_1)\cap f_2(r_2)$, so that $g(t_g)\subseteq f_1(r_1)$ and $g(t_g)\subseteq f_2(r_2)$.
This means that $g\in B_{r_1}(f_1)$ and $g\in B_{r_2}(f_2)$, and so $g\in B_{r_1}(f_1)\cap B_{r_2}(f_2)$. The lemma is proved.
\end{proof}

\begin{Remark}
If instead of $[I{\mathfrak L(\mathcal H)}]_{\rm seq}$ one considers $\overline{I\mathfrak L(\mathcal H)}$, the closure of the set of functions $I\mathfrak L(\mathcal H)$ in the topology of pointwise convergence on the lattice $\mathfrak F$, then analogs of Lemmas \ref{Lem iso} and \ref{Lem base} will hold. In the proof of Lemma \ref{Lem iso} in this case one should substitute the sequence $\{\mathcal G_n\}_{n\in\mathbb N}$ by the net $\{\mathcal G_{\alpha}\}$. The ball topology on $[I{\mathfrak L(\mathcal H)}]_{\rm seq}\subset\mathfrak F(\mathcal H)$ clearly differs from the topology of pointwise convergence.
\end{Remark}

\subsubsection{The wave isotony}
For every positive-definite self-adjoint operator $A$ one can define an isotony of the lattice $\mathfrak L(\mathcal H)$ in the following way. Consider the system
\begin{align}
    \label{dual 1} & v_{tt}+A v = g\,,  && t>0,
    \\
    \label{dual 2} & v|_{t=0}=v_t|_{t=0}=0\,,
\end{align}
where $g$ is an $\cal H$-valued function of time. If $g\in C^{\infty}\left([0,\infty); {\cal H}\right)$, then this problem has the unique solution $v=v^g(t)$ given by the Duhamel's formula \cite{Birman-Solomyak-1980}:
\begin{align}\label{v^g general}
    v^g(t)=\int_0^tA^{-\frac{1}{2}}\,\sin\left[(t-s)A^{\frac{1}{2}}\right]g(s)ds.
\end{align}
Let $\cal G \in {{\mathfrak L}(\cal H)}$. Consider the sets
\begin{equation}\label{V^t}
    \mathcal V^t_{A}(\mathcal G):=\left\{v^g(t),g\in C^{\infty}([0,t];{\mathcal G})\right\}
\end{equation}
and define the family of maps $\{I_{A}^t\}_{t\geqslant 0}$ as follows:
\begin{align*}\label{I^t}
    I_{A}^0&:={\rm id};
    \\
    I_{A}^t(\mathcal G)&:=\overline{{\mathcal V}^t_A(\mathcal G)},t>0.
\end{align*}

\begin{Proposition}[\cite{Belishev-2013}]\label{Prop 1}
The family $\{I_A^t\}_{t\geqslant 0}$ is an isotony of the lattice ${\mathfrak L}(\mathcal H)$.
\end{Proposition}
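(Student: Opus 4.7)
The plan is to verify the four requirements hidden in the definitions of ``isotony'' and ``isotony of the lattice'': (i) each $I_A^t$ takes values in $\mathfrak L(\mathcal H)$; (ii) $I_A^0 = \mathrm{id}$; (iii) $I_A^t(\{0\}) = \{0\}$ for every $t \geqslant 0$; and (iv) the joint monotonicity $\mathcal G_1 \subseteq \mathcal G_2,\ t_1 \leqslant t_2 \Rightarrow I_A^{t_1}(\mathcal G_1) \subseteq I_A^{t_2}(\mathcal G_2)$. Items (i)--(iii) are essentially immediate: the linearity of $g \mapsto v^g(t)$ evident from (\ref{v^g general}) makes $\mathcal V_A^t(\mathcal G)$ a linear subset, so its closure belongs to $\mathfrak L(\mathcal H)$; (ii) is the very definition of $I_A^0$; and (iii) follows because $C^\infty([0,t]; \{0\})$ contains only the zero function, forcing $\mathcal V_A^t(\{0\}) = \{0\}$.

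The content is in (iv). Monotonicity in the subspace argument is free at fixed $t$, since $C^\infty([0,t]; \mathcal G_1) \subseteq C^\infty([0,t]; \mathcal G_2)$ when $\mathcal G_1 \subseteq \mathcal G_2$, and closure preserves the inclusion. It therefore suffices to establish $I_A^{t_1}(\mathcal G) \subseteq I_A^{t_2}(\mathcal G)$ for $t_1 \leqslant t_2$ and fixed $\mathcal G$, and the natural tool is a \emph{time translation} of the control inside the Duhamel formula (\ref{v^g general}).

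The key construction is: for $g \in C^\infty([0,t_1]; \mathcal G)$ vanishing in a neighbourhood of $0$, define
\[
    \tilde g(s) := \begin{cases} 0, & 0 \leqslant s \leqslant t_2 - t_1, \\ g\bigl(s - (t_2 - t_1)\bigr), & t_2 - t_1 \leqslant s \leqslant t_2. \end{cases}
\]
The gap of $g$ near $0$ ensures $\tilde g \in C^\infty([0,t_2]; \mathcal G)$, and the substitution $\tau = s - (t_2 - t_1)$ in (\ref{v^g general}) yields $v^{\tilde g}(t_2) = v^g(t_1)$, so $v^g(t_1)$ already lies in $\mathcal V_A^{t_2}(\mathcal G)$. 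For an arbitrary $g \in C^\infty([0,t_1]; \mathcal G)$ I would approximate it in $L^1([0,t_1]; \mathcal H)$ by a sequence $g_n \in C^\infty([0,t_1]; \mathcal G)$ with $\mathrm{supp}\, g_n \subset (0, t_1)$ via a standard smooth cutoff, and apply the shift construction to each $g_n$. The elementary bound $\|A^{-1/2}\sin(sA^{1/2})\| \leqslant s$ (from $|\sin x| \leqslant x$ and the spectral theorem, valid because $A$ is positive-definite) gives
\[
    \|v^{g_n}(t_1) - v^g(t_1)\| \leqslant t_1 \int_0^{t_1}\|g_n(s) - g(s)\|\, ds \longrightarrow 0,
\]
so $v^g(t_1) \in \overline{\mathcal V_A^{t_2}(\mathcal G)} = I_A^{t_2}(\mathcal G)$. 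Taking closure of the inclusion $\mathcal V_A^{t_1}(\mathcal G) \subseteq I_A^{t_2}(\mathcal G)$ completes the argument.

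The only step requiring care is the time-translation itself: the shifted control must be smooth across $s = t_2 - t_1$, which is exactly what forces the preliminary cutoff of $g$ away from the endpoints. The continuity estimate on $g \mapsto v^g(t_1)$ then absorbs this smoothing, so I do not anticipate any deeper obstacle.
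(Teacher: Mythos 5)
Your items (i)--(iii) and the time-shift argument are sound for $0<t_1\leqslant t_2$: the cutoff of $g$ away from $s=0$, the substitution giving $v^{\tilde g_n}(t_2)=v^{g_n}(t_1)$, the bound $\|A^{-1/2}\sin(sA^{1/2})\|\leqslant s$, and the closure argument all check out, and monotonicity in the subspace variable at fixed $t$ is indeed immediate. But there is a genuine gap: the definition of isotony, applied with $p_1=p_2=\mathcal G$ and $t_1=0\leqslant t_2$, together with your own item (ii) $I_A^0=\mathrm{id}$, requires the expansion property $\mathcal G\subseteq I_A^{t}(\mathcal G)$ for every $t>0$ and every $\mathcal G\in\mathfrak L(\mathcal H)$. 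Your shift construction cannot reach this case, because $I_A^0(\mathcal G)=\mathcal G$ is the identity by fiat, not $\overline{\mathcal V_A^0(\mathcal G)}=\{0\}$; nothing you prove about inclusions among the sets $\overline{\mathcal V_A^{t}(\mathcal G)}$ for positive times says that elements of $\mathcal G$ itself are (approximately) reachable by $\mathcal G$-valued controls. This is not a vacuous requirement --- it is precisely the feature that makes the wave isotony expand subspaces (compare $I_L^t(L_2(a,b))=L_2((a,b)^t)\supseteq L_2(a,b)$ in Lemma \ref{Lemma Basic}), and it needs an idea different from time translation.

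The missing step can be closed as follows. Fix $y\in\mathcal G$, $t>0$, and let $\eta_\varepsilon\in C^\infty(\mathbb R)$, $\eta_\varepsilon\geqslant0$, ${\rm supp}\,\eta_\varepsilon\subset(0,2\varepsilon)$ with $2\varepsilon<t$, $\int\eta_\varepsilon=1$. Take the control $g_\varepsilon(s):=-\eta_\varepsilon'(t-s)\,y\in C^{\infty}([0,t];\mathcal G)$. Substituting $\sigma=t-s$ in \eqref{v^g general} and integrating by parts (the boundary terms vanish since $\sin(0\cdot A^{1/2})=0$ and $\eta_\varepsilon(t)=0$) gives
\begin{equation*}
    v^{g_\varepsilon}(t)=\Bigl(\int_0^t\cos\bigl(\sigma A^{1/2}\bigr)\,\eta_\varepsilon(\sigma)\,d\sigma\Bigr)y .
\end{equation*}
By the spectral theorem, $\bigl|\int_0^t\cos(\sigma\sqrt\lambda)\eta_\varepsilon(\sigma)d\sigma\bigr|\leqslant1$ and this quantity tends to $1$ for each fixed $\lambda$ as $\varepsilon\to0$, so dominated convergence with respect to the spectral measure of $y$ yields $v^{g_\varepsilon}(t)\to y$; hence $y\in\overline{\mathcal V_A^t(\mathcal G)}=I_A^t(\mathcal G)$. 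With this supplement your proof is complete; note that the paper itself does not prove Proposition \ref{Prop 1} but cites \cite{Belishev-2013}, so the completed argument serves as a self-contained substitute.
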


We call such an isotony $I_{A}^t$ the {\it wave isotony} of the lattice $\mathfrak L(\mathcal H)$ defined by the operator $A$.

\subsubsection{The wave spectrum}
Let us return to the original problem. The family of reachable sets of the system $\alpha_{L_0}$ defines the family of subspaces $\mathfrak M_{L_0}=\{\mathcal U_{L_0}^t,t\geqslant 0\}\subset\mathfrak L(\mathcal H)$, and the operator $L$ defines the wave isotony $I_L^t$. As we mentioned above, there exists the minimal sublattice $\mathfrak L_{L_0}$ in $\mathfrak L(\mathcal H)$ which contains the family   $\mathfrak M_{L_0}$ and is invariant under $I_L^t$. Denote $I_L\mathfrak L_{L_0}=\{I_L(\mathcal G), \mathcal G\in\mathfrak L_{L_0}\}$, also denote by $[I_L\mathfrak L_{L_0}]_{\rm seq}$ the (sequential) closure of this set in $\mathfrak F(\mathcal H)$.

The {\it wave spectrum} $\Omega_{L_0}$ of the operator $ L_0$ is the set of atoms of the partially ordered set $[I_L\mathfrak L_{L_0}]_{\rm seq}$,
$$
    \Omega_{L_0}:={\rm At}[I_L\mathfrak L_{L_0}]_{\rm seq}.
$$
The wave model of the operator $L_0$, which is a unitarily equivalent operator in the model space, requires for its construction some additional conditions on $L_0$. Examples that we considered earlier \cite{Belishev-2013,Belishev-Simonov-2017} suggest that the wave model can be constructed for some class of differential operators. In course of construction we formulate these additional general conditions on $L_0$ using notions that we gradually introduce.

\begin{Condition}\label{condition 1 nonempty}
The wave spectrum of the operator $L_0$ is not empty: $\Omega_{L_0}\neq\emptyset$.
\end{Condition}

The ball topology on $[I_L\mathfrak L(\mathcal H)]_{\rm seq}$ induces a topology on the wave spectrum. Under additional assumptions on $\Omega_{L_0}$ one can also define a metric (in the examples mentioned above the ``balls'' $B_r(f)$ turn out to be open balls in this metric). Each atom $\omega\in\Omega_{L_0}$, being a function from $[0,\infty)$ to $\mathfrak L(\mathcal H)$, defines a non-decreasing family of projections $P_{\omega(t)}$. If $P_{\omega(t)}\overset{s}\to I$ as $t\rightarrow+\infty$, then one can consider the self-adjoint and, generally speaking, unbounded operator
$$
    \tau_{\omega}=\int_{0}^{\infty}tdP_{\omega(t)},
$$
the {\it eikonal}. It can happen that even for unbounded $\tau$ the following holds.

\begin{Condition}\label{condition 2 eikonal}
    $P_{\omega(t)}\overset{s}\to I$ as $t\rightarrow+\infty$ for every $\omega\in\Omega_{L_0}$, and $\tau_{\omega_1}-\tau_{\omega_2}$ is a bounded operator in $\mathcal H$ for every $\omega_1,\omega_2\in\Omega_{L_0}$.
\end{Condition}

In such a case one can consider the function
$$
    \tau(\omega_1,\omega_2)=\|\tau_{\omega_1}-\tau_{\omega_2}\|
$$
as a distance in $\Omega_{L_0}$ (the properties of distance can be checked easily). For the wave spectrum one can also define the ``boundary'' $\partial\Omega_{L_0}$ as the set
$$
    \partial\Omega_{L_0}:=\{\omega\in\Omega_{L_0}\ :\ \forall t>0\ \omega(t)\subseteq\overline{\mathcal U_{L_0}^t}\}.
$$
In the case of the Laplace operator on a compact Riemannian manifold the ``boundary'' of the wave spectrum corresponds to the boundary of the manifold \cite{Belishev-2013}.

\subsection{The wave model}\label{sec Model}
Our goal is to construct the wave model so that this construction is applicable not only to the operator $L_0$, but also to its unitary copies. For this it is important to ensure that the wave model is constructed using the objects which are available to the ``outer observer''.

\subsubsection{The wave representation}
If Conditions 1 and 2 hold for the operator $L_0$, then its wave spectrum is a metric space with the distance $\tau$. The model space for the wave model should consist of functions on $\Omega_{L_0}$ which take values in some ``natural'' auxiliary spaces.

The first step in constructing the model space are spaces of germs on atoms. For every $\omega\in\Omega_{L_0}$ consider the following equivalence relation on $\mathcal H$: $u_1\underset{\omega}\sim u_2$, if there exists $t>0$ such that $P_{\omega(t)}u_1=P_{\omega(t)}u_2$. Corresponding equivalence classes $\tilde u(\omega)$ are called {\it germs}. Germs form the linear space which we denote by $\tilde{\mathcal H}_{\omega}$ and call the {\it stalk} above $\omega$. Consider the space of functions on the wave spectrum, which take values in stalks
$$
    \tilde{\mathcal H}:=\{\tilde u(\cdot),u\in\mathcal H\}.
$$
We need the operator $W:u\mapsto\tilde u\in\tilde{\mathcal H}$ to be bijective from $\mathcal H$ to $\tilde{\mathcal H}$, and for this the following condition is imposed, which we call {\it completeness of the system of atoms} of the wave spectrum.

\begin{Condition}\label{condition 3 complete system}
   For every nonzero $u \in \cal H$ there exists an atom $\omega\in\Omega_{L_0}$ such that $P_{\omega(\varepsilon)}u\neq0$ for every $\varepsilon>0$.
\end{Condition}

It is not convenient to work with this space, because stalks have infinite dimension. Besides that there is no Hilbert structure there. Thus we need additional conditions. Possibility to factorize further in germs is related to existence of gauge elements in $\mathcal H$. In order to define them, we need the following condition of {\it vanishing of atoms at zero}.

\begin{Condition}\label{condition 4 vanishing at zero}
  $\omega(t)\overset{\mathfrak L(\mathcal H)}\longrightarrow\{0\}$ as $t\rightarrow+0$ for every $\omega\in\Omega_{L_0}$.
\end{Condition}

By Lemma \ref{Lem iso} this is equivalent to the condition $\bigcap\limits_{t>0}\omega(t)=\{0\}$ for every atom. We call an element $e\in\mathcal H$ a {\it gauge} element of the operator $L_0$, if there exists a set of atoms $\Omega_{L_0}^{e}\subseteq\Omega_{L_0}$ such that its elements form a complete system in the sense of Condition \ref{condition 3 complete system} and that for every $u\in\mathcal U_{L_0}$ and $\omega\in\Omega_{L_0}^{e}$ the following limit exists:
$$
    \lim\limits_{t\to+0}\frac{\|P_{\omega(t)}u\|_{\mathcal H}}{\|P_{\omega(t)} e\|_{\mathcal H}}.
$$
As we see, the linear set of smooth waves starts playing an important role here.

\begin{Condition}\label{condition 5 gauge element}
  The operator $L_0$ has a gauge element.
\end{Condition}

Let $\omega\in\Omega_{L_0}^e$. For every $u,v\in{\mathcal U}_{L_0}$ the limit
$$
    \langle u,v\rangle_{\omega}:=\lim\limits_{t\to+0}\frac{(P_{\omega(t)}u,v)_{\mathcal H}}{(P_{\omega(t)} e,e)_{\mathcal H}}
$$
exists. It can be considered as a non-negative sesquilinear form on $\tilde{\mathcal U}_{L_0,\omega}:=\{\tilde u(\omega), u\in{\mathcal U}_{L_0}\}$, a linear set in the stalk above $\omega$. After factorization of $\tilde{\mathcal U}_{L_0,\omega}$ by the neutral subspace $\tilde{\mathcal U}^{\,0}_{L_0,\omega}$ of this form we obtain the linear space $\tilde{\mathcal U}_{L_0,\omega}/\tilde{\mathcal U}_{L_0,\omega}^{\,0}$. Denote its elements by $[u](\omega)$, $u\in\mathcal U_{L_0}$. This space has the inner product
$$
    \langle[u](\omega),[v](\omega)\rangle_{\mathcal U_{L_0,\omega}^{\rm w}}=\langle u,v\rangle_{\omega}.
$$
After completion in the corresponding norm we obtain the {\it space of values} $\mathcal U_{L_0,\omega}^{\rm w}$.

\begin{Condition}\label{condition 6 measure on spectrum}
    There exists a measure $\mu$ on $\Omega_{L_0}$ such that $\mu(\Omega_{L_0}\backslash\Omega^e_{L_0})=0$ and the equality
    \begin{equation}\label{isometry u to u omega}
        (u,v)_{\mathcal H}=\int_{\Omega_{L_0}}\langle[u](\omega),[v](\omega)\rangle_{\mathcal U_{L_0,\omega}^{\rm w}} d\mu(\omega)
    \end{equation}
    holds for every $u,v\in{\cal U}_{L_0}$.
\end{Condition}

The space
$$
    {\mathcal H}^{\rm w}:=\oplus\int_{\Omega_{L_0}}{\mathcal U}_{L_0,\omega}^{\rm w}\,d\mu(\omega)
$$
is called the {\it wave representation} of the space $\mathcal H$. For the operator $W_0^{\rm w}:u\mapsto[u](\cdot)$ which acts from $\mathcal U_{L_0}$ to ${\mathcal H}^{\rm w}$ one has $\|W_0^{\rm w}u\|_{\mathcal H}=\|[u]\|_{{\mathcal H}^{\rm w}}$ owing to \eqref{isometry u to u omega}, therefore the operator $W^{\rm w}=\overline{W_0^{\rm w}}$ is isometric.

\begin{Condition}\label{condition 7 unitarity of transition operator}
    The operator $W^{\rm w}$ of passing from $\mathcal H$ to $\mathcal H^{\rm w}$ is unitary.
\end{Condition}

We consider the space $\mathcal H^{\rm w}$ as the model space. The operator $W^{\rm w}$ defines the unitary copy $W^{\rm w}L_0^*{W^{\rm w}}^*$ of the operator $L_0^*$ which acts in $\mathcal H^{\rm w}$. Since for each $u\in\mathcal U_{L_0}$ there exists a control $h\in\mathcal M$ and $T\geqslant 0$ such that $u=u^h(T)$, we can write
$$
    L_0^*u=L_0^*u^h(T)=-u^h_{tt}(T)=-u^{h_{tt}}(T).
$$
The graph of the unitary image of the wave part of the operator $L_0^*$ can be defined via smooth waves:
\begin{multline*}
    {\rm Graph}\left(W^{\rm w}L_0^*|_{{\cal U}_{L_0}}{W^{\rm w}}^*\right)=\left\{(W^{\rm w}u,W^{\rm w}L_0^*u),u\in\mathcal U_{L_0}\right\}
    \\
    =\left\{(W^{\rm w}u^h(T),-W^{\rm w}u^{h_{tt}}(T)),h\in{\mathcal M},T\geqslant 0\right\}.
\end{multline*}
This way of constructing the wave model is available to the ``outer observer'' who can apply different controls and draw graphs.

\subsubsection{The coordinate representation}
If defect indices of the operator $L_0$ are finite, then under additional assumptions one can define coordinates in spaces of values $\mathcal U_{L_0,\omega}^{\rm w}$ and pass to the wave model, where the operator is represented as a differential operator acting in a space of square integrable functions.

\begin{Condition}\label{condition 8 basis}
    The operator $L_0$ has defect indices $(n,n)$, $n<\infty$. The subspace ${\rm Ker\,}L_0^*$ lies in $\mathcal U_{L_0}$. There exists a basis $e_1,e_2,...,e_n$ in ${\rm Ker\,}L_0^*$ and a set $\Omega_{L_0}^{0}\subseteq\Omega_{L_0}^{e}$, atoms of which form a complete system and for which $\mu(\Omega_{L_0}\backslash\Omega_{L_0}^0)=0$, such that for every $\omega\in\Omega_{L_0}^{0}$ the elements $[e_1](\omega),[e_2](\omega),...,[e_n](\omega)$ form a basis in the space of values $\mathcal U_{L_0,\omega}^{\rm w}$.
\end{Condition}

For atoms $\omega\in\Omega_{L_0}$ and smooth waves $u\in\mathcal U_{L_0}$ elements $[u](\omega)$ can be decomposed over the basis $[e_1](\omega),[e_2](\omega),...,[e_n](\omega)$. Coefficients of this decomposition can be found from the limit
$$
    \hat u(\omega):=\begin{pmatrix}
      \langle u,e_1\rangle_{\omega} \\
      \langle u,e_2\rangle_{\omega} \\
      \vdots \\
      \langle u,e_n\rangle_{\omega} \\
    \end{pmatrix}
    =
    \lim_{t\to+0}\frac1{(P_{\omega(t)}e,e)}
    \begin{pmatrix}
      (P_{\omega(t)}u,e_1) \\
      (P_{\omega(t)}u,e_2) \\
      \vdots \\
      (P_{\omega(t)}u,e_n) \\
    \end{pmatrix}
$$
and the Gram matrix
\begin{multline*}
    G(\omega)=
    \begin{pmatrix}
      \langle e_1,e_1\rangle_{\omega} & \langle e_2,e_1\rangle_{\omega} & \cdots & \langle e_n,e_1\rangle_{\omega} \\
      \langle e_1,e_2\rangle_{\omega} & \langle e_2,e_2\rangle_{\omega} & \cdots & \langle e_n,e_2\rangle_{\omega} \\
      \vdots & \vdots & \ddots & \vdots \\
      \langle e_1,e_n\rangle_{\omega} & \langle e_2,e_n\rangle_{\omega} & \cdots & \langle e_n,e_n\rangle_{\omega} \\
    \end{pmatrix}
    \\
    =\lim_{t\to+0}\frac1{(P_{\omega(t)}e,e)}
    \begin{pmatrix}
      (P_{\omega(t)}e_1,e_1) & (P_{\omega(t)}e_2,e_1) & \cdots & (P_{\omega(t)}e_n,e_1) \\
      (P_{\omega(t)}e_1,e_2) & (P_{\omega(t)}e_2,e_2) & \cdots & (P_{\omega(t)}e_n,e_2) \\
      \vdots & \vdots & \ddots & \vdots \\
      (P_{\omega(t)}e_1,e_n) & (P_{\omega(t)}e_2,e_n) & \cdots & (P_{\omega(t)}e_n,e_n) \\
    \end{pmatrix};
\end{multline*}
this information is available to the ``outer observer''. It is easier, however, to take in the coordinate representation $\hat u(\omega)$ instead of these coefficients as values at $\omega$. In this way we obtain the model of the wave part of the operator $L_0^*$ in the space
$$
    \mathcal H^{\rm c}:=L_2(\Omega_{L_0},\mu,\mathbb C^n)
$$
of the coordinate representation which we also call the wave model. In a perfect situation one can define on $\Omega_{L_0}$ a manifold structure or even global coordinates. This takes place for the Laplace operator on a compact Riemannian manifold \cite{Belishev-2013}, for positive-definite Schr\"odinger operator on the half-line \cite{Belishev-Simonov-2017}, and in our case.

\section{Sturm--Liouville operator on an interval}
Let us look at realization of the abstract scheme for the Sturm--Liouville operator on an interval.

\subsection{The operator $L_0$}
Let $0<l<\infty$, ${\cal H} = L_2(0,l)$. The operator $L_0$ is defined on the domain
\begin{equation}\label{Dom}
    {\rm Dom\,} L_0 =\left\{u\in H^2(0,l): u(0)=u'(0)=u(l)=u'(l)=0\right\}
\end{equation}
by the differential expression
\begin{equation}\label{differential expression}
    L_0u:=-u''+qu,
\end{equation}
where $q\in C^{\infty}[0,l]$ is a smooth function such that the operator $L_0$ is positive-definite. Such an operator is symmetric and has the defect indices $(2,2)$. Its adjoint $L_0^*$ is defined by the same differential expression on the domain
$$
    {\rm Dom\,} L_0^* =H^2(0,l).
$$
The Friedrichs extension $L$ of $L_0$ is defined on the domain
$$
    {\rm Dom\,} L =\left\{u\in H^2(0,l): u(0)=u(l)=0\right\}.
$$

\subsection{The Green's system}
To describe the subspace ${\mathcal K}={\rm Ker\,}L_0^*$ define two solutions of the equation $-u''+qu=0$. Denote by $\phi_0$ the solution of this equation with the initial data $\phi_0(0)=0$, $\phi_0'(0)=1$ and by $\phi_l$ the solution with the data $\phi_l(l)=0$, $\phi_l'(l)=1$. Since the operator $L$ is positive-definite, $0$ is not its eigenvalue and these functions cannot be proportional. Therefore they form a basis in $\mathcal K$.

Let us write out the Vishik's decomposition for $u\in{\rm Dom\,}L_0$. Let
$$
    \eta_0:=L^{-1}\phi_0,\ \eta_l:=L^{-1}\phi_l.
$$

\begin{Lemma}
    In the decomposition of $u\in{\rm Dom\,}L_0^*$
    $$
        u=u_0+L^{-1}g_u+h_u,
    $$
    the elements $g_u,h_u\in\mathcal K$ are given by the formulas
    \begin{multline}\label{g-u}
        g_u=\frac{1}{\eta_0'(0)\eta_l'(l)-\eta_l'(0)\eta_0'(l)}
        \\
        \times\left\{\bigg[\eta_l'(l)\bigg(u'(0)-\frac{u(l)}{\phi_0(l)}-\frac{u(0)}{\phi_l(0)}\phi_l'(0)\bigg)\right.
        -\eta_l'(0)\bigg(u'(l)-\frac{u(l)}{\phi_0(l)}\phi_0'(l)-\frac{u(0)}{\phi_l(0)}\bigg)\bigg]\eta_0
        \\
        +\bigg[\eta_0'(0)\bigg(u'(l)-\frac{u(l)}{\phi_0(l)}\phi_0'(l)-\frac{u(0)}{\phi_l(0)}\bigg)
        \left.\left.-\eta_0'(l)\left(u'(0)-\frac{u(l)}{\phi_0(l)}-\frac{u(0)}{\phi_l(0)}\phi_l'(0)\right)\right]\eta_l\right\},
    \end{multline}
    \begin{equation}\label{h-u}
        h_u=\frac{u(l)}{\phi_0(l)}\phi_0+\frac{u(0)}{\phi_l(0)}\phi_l.
    \end{equation}
\end{Lemma}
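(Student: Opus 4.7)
The plan is to determine $h_u$ and $g_u$ from the boundary data of $u$ at the endpoints $0$ and $l$, and then to take $u_0 := u - L^{-1}g_u - h_u$ and verify $u_0 \in {\rm Dom\,}L_0$. Uniqueness of the decomposition is already assured by the directness of Vishik's sum in \eqref{Vishik general}, so only the explicit formulas have to be derived.

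First I would determine $h_u$ from the values $u(0)$ and $u(l)$. Both $u_0 \in {\rm Dom\,}L_0$ and $L^{-1}g_u \in {\rm Dom\,}L$ vanish at $0$ and at $l$, so $u(0) = h_u(0)$ and $u(l) = h_u(l)$. Writing $h_u = \alpha\phi_0 + \beta\phi_l$ and using $\phi_0(0) = 0$, $\phi_l(l) = 0$, together with $\phi_0(l) \neq 0$ and $\phi_l(0) \neq 0$ (otherwise $\phi_0$ and $\phi_l$ would be proportional, making $0$ an eigenvalue of $L$, contradicting positive-definiteness), I solve for $\alpha = u(l)/\phi_0(l)$ and $\beta = u(0)/\phi_l(0)$, which is exactly formula \eqref{h-u}.

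Next, set $w := u - h_u \in H^2(0,l)$, so $w(0) = w(l) = 0$ and $w = u_0 + L^{-1}g_u$. Since $u_0'(0) = u_0'(l) = 0$, the derivative values $w'(0)$ and $w'(l)$ equal $(L^{-1}g_u)'(0)$ and $(L^{-1}g_u)'(l)$ respectively; these are computed explicitly from $u'(0), u'(l)$ and the boundary data of $\phi_0, \phi_l$ using $\phi_0'(0) = \phi_l'(l) = 1$, yielding the two parenthesised expressions that appear in \eqref{g-u}. Writing $g_u = a\phi_0 + b\phi_l \in \mathcal{K}$, so that $L^{-1}g_u = a\eta_0 + b\eta_l$, these two conditions form the $2{\times}2$ linear system
$$
\begin{pmatrix} \eta_0'(0) & \eta_l'(0) \\ \eta_0'(l) & \eta_l'(l) \end{pmatrix}\begin{pmatrix} a \\ b \end{pmatrix}=\begin{pmatrix} w'(0) \\ w'(l) \end{pmatrix},
$$
and solving by Cramer's rule produces \eqref{g-u}.

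The only nontrivial point, and the step I expect to be the main obstacle, is invertibility of this matrix, i.e., that $\Delta := \eta_0'(0)\eta_l'(l) - \eta_l'(0)\eta_0'(l) \neq 0$. If $\Delta$ vanished, there would exist a nonzero $a\eta_0 + b\eta_l \in L^{-1}\mathcal{K}$ with zero values and derivatives at both endpoints, hence lying in ${\rm Dom\,}L_0$; this contradicts the directness of Vishik's decomposition \eqref{Vishik general}. Once $\Delta \neq 0$ is established, the function $u_0 := u - L^{-1}g_u - h_u \in H^2(0,l)$ satisfies $u_0(0) = u_0(l) = 0$ by the choice of $h_u$ and $u_0'(0) = u_0'(l) = 0$ by the choice of $g_u$, so $u_0 \in {\rm Dom\,}L_0$. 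The remainder is bookkeeping of the two Cramer determinants.
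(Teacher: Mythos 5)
Your proposal is correct and takes essentially the same route as the paper: the paper likewise determines $h_u$ from the endpoint values $u(0),u(l)$ and $g_u$ from the endpoint derivatives via the resulting $2\times2$ linear system, using the initial data of $\phi_0,\phi_l$ and $L^{-1}g_u=d_0\eta_0+d_l\eta_l$. Your explicit justification that $\eta_0'(0)\eta_l'(l)-\eta_l'(0)\eta_0'(l)\neq0$ (via the directness of the Vishik decomposition) is a point the paper leaves implicit, but it is an addition to the same argument rather than a different approach.
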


\begin{proof}
Since $u_0(0)=u_0'(0)=u_0(l)=u_0'(l)=0$ and $(L^{-1}g_u)(0)=(L^{-1}g_u)(l)=0$, we should find the coefficients in the equalities
\begin{equation}\label{h-u,g-u}
    h_u=c_0\phi_0+c_l\phi_l,\ g_u=d_0\phi_0+d_l\phi_l,
\end{equation}
such that
\begin{align*}
    u(0)&=h_u(0),
    \\
    u(l)&=h_u(l),
    \\
    u'(0)&=(L^{-1}g_u)'(0)+h_u'(0),
    \\
    u'(l)&=(L^{-1}g_u)'(l)+h_u'(l).
\end{align*}
Substituting here \eqref{h-u,g-u} and $L^{-1}g_u=d_0\eta_0+d_l\eta_l$ and taking into account initial data for the solutions $\phi$, we find the coefficients $c_0,c_l,d_0$ and $d_l$ and arrive at the formulas \eqref{g-u}  \eqref{h-u}.
\end{proof}

We get from the lemma and \eqref{**}:
\begin{equation}\label{Gamma-1}
    \Gamma_1u=-\frac{u(l)}{\phi_0(l)}\phi_0-\frac{u(0)}{\phi_l(0)}\phi_l,
\end{equation}
\begin{multline}\label{Gamma-2}
    \Gamma_2u=\frac{1}{\eta_0'(0)\eta_l'(l)-\eta_l'(0)\eta_0'(l)}
    \\
    \times\left\{\bigg[u(0)\frac{\eta_l'(0)-\eta_l'(l)\phi_l'(0)}{\phi_l(0)}-u(l)\frac{\eta_l'(l)-\eta_l'(0)\phi_0'(l)}{\phi_0(l)}\right.+u'(0)\eta_l'(l)-u'(l)\eta_l'(0)\bigg]\phi_0
    \\
    -\bigg[u(0)\frac{\eta_0'(0)+\eta_0'(l)\phi_l'(0)}{\phi_l(0)}\left.-u(l)\frac{\eta_0'(l)-\eta_0'(0)\phi_0'(l)}{\phi_0(l)}+u'(0)\eta_0'(l)-u'(l)\eta_0'(0)\bigg]\phi_l\right\}.
\end{multline}
The spaces $\mathcal H=L_2(0,l)$, $\mathcal K = \{c_0\phi_0+c_l\phi_l,c_0,c_l\in\mathbb C\}$ and the operators $L_0,\Gamma_1,\Gamma_2$ defined by \eqref{Dom}, \eqref{differential expression}, \eqref{Gamma-1}, and \eqref{Gamma-2} form the Green's system $\mathfrak G_{L_0}$, which canonically corresponds to the operator $L_0$.

\subsection{The system with boundary control}
Consider the system \eqref{alpha1}--\eqref{alpha3} in our case. The boundary control $h(t)\in\mathcal K$ can be written in the form
$$
    h(t)=-\frac{f_l(t)}{\phi_0(l)}\phi_0-\frac{f_0(t)}{\phi_l(0)}\phi_l,
$$
where the functions $f_0(t)$ and $f_l(t)$ are taken from the class
\begin{equation}\label{linal cal M SL}
    \dot{\mathcal M}=\{f \in C^{\infty}\left[0,\infty\right):\ {\rm supp\,}f \subset (0,\infty)\}.
\end{equation}
Then the system \eqref{alpha1}--\eqref{alpha3} takes the form of the initial-boundary value problem
\begin{align*}
    & u_{tt}-u_{xx}+qu = 0,  && x\in(0,l),t>0,
    \\
    & u|_{t=0}=u_t|_{t=0}=0, && x\in[0,l],
    \\
    & u|_{x=0} = f_0(t), && t\geqslant 0,
    \\
    & u|_{x=l} = f_l(t), && t\geqslant 0.
\end{align*}
The solution of such a problem for $t\leqslant l$ is given by the formula
\begin{multline}\label{u^f repres}
    u^f(x,t)=f_0(t-x)+f_l(t-l+x)
    \\
    +\int_x^t w_0(x,s)f_0(t-s)ds+\int_{l-x}^t w_l(l-x,s)f_l(t-s)ds,
\end{multline}
where the functions $f_0$ and $f_l$ are assumed to be zero on the negative half-line, the functions $w_0(x,t)$ and $w_l(x,t)$ are defined for $0\leqslant x\leqslant t\leqslant l$ and are smooth.

\subsubsection{Controllability of the system $\alpha_{L_0}$}\label{ssec Controllability}
Let us find reachable sets of the system $\alpha_{L_0}$.

\begin{Lemma}\label{Lemma controll SL}
    \begin{equation}\label{UT SL}
        \mathcal U_{L_0}^t=
        \left\{
        \begin{array}{ll}
            \{u\in C^{\infty}[0,l]:\ {\rm supp}\, u\subset[0,t)\cup(l-t,l]\},&t\leqslant \frac l2,
            \\
            C^{\infty}[0,l],&t>\frac l2.
        \end{array}
        \right.
    \end{equation}
\end{Lemma}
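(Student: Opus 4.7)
I will lean throughout on the explicit d'Alembert-type representation \eqref{u^f repres}, which decomposes $u^f(\cdot, t) = v_0(\cdot, t) + v_l(\cdot, t)$, where $v_0(x,t) := f_0(t-x) + \int_x^t w_0(x,s)\,f_0(t-s)\,ds$ depends only on $f_0$ and $v_l$ depends only on $f_l$. Each piece is a Volterra transform of its driver, so the argument splits naturally into a finite-speed-of-propagation half (the inclusion $\subseteq$) and a Volterra-inversion half (the inclusion $\supseteq$), capped off by an abstract bootstrap for $t > l$.

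\paragraph{Forward inclusion.} Since $f_0 \in \dot{\mathcal M}$ vanishes in a neighborhood of $0$, the quantity $f_0(t-x)$ vanishes both for $x \geq t$ and in a left-neighborhood of $x=t$; the same applies to the integrand in $v_0$ (because $t-s$ is small and positive when $s$ is close to $t$). Combined with the smoothness of $w_0$, this gives $v_0(\cdot,t)\in C^{\infty}[0,l]$ with ${\rm supp}\,v_0(\cdot,t)\subset [0,t)$. Symmetrically ${\rm supp}\,v_l(\cdot,t)\subset(l-t,l]$. For $t\leq l/2$ these supports are disjoint, yielding the claimed support description of $u^f$. For $l/2<t\leq l$ the two intervals already cover $[0,l]$, so no support restriction remains. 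For $t>l$ the representation no longer applies directly, but $u^h\in C^{\infty}([0,\infty);\mathcal H)$ combined with $-u^h_{tt}=L_0^*u^h$ gives by induction $u^h(t)\in\bigcap_k{\rm Dom}\,(L_0^*)^k\subset\bigcap_k H^{2k}(0,l)=C^{\infty}[0,l]$.

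\paragraph{Reverse inclusion.} After the change of variables $\xi = t-x$, $\sigma = t-s$, the equation $v_0(\cdot,t)=\varphi$ becomes the Volterra equation of the second kind
$$
f_0(\xi) + \int_0^{\xi} w_0(t-\xi,\,t-\sigma)\,f_0(\sigma)\,d\sigma = \varphi(t-\xi),\qquad \xi\in[0,t],
$$
which admits a unique smooth solution $f_0$ for every $\varphi\in C^{\infty}[0,l]$ vanishing on $[t,l]$; moreover, if $\varphi$ vanishes identically in a neighborhood of $x=t$, the Volterra structure forces $f_0$ to vanish identically in a neighborhood of $\xi=0$, so $f_0\in\dot{\mathcal M}$. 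The analogous inversion produces $f_l$. For $t\leq l/2$, given $u\in C^{\infty}[0,l]$ with ${\rm supp}\,u\subset[0,t)\cup(l-t,l]$, the pieces $u_0:=u\cdot\mathbf{1}_{[0,l/2]}$ and $u_l:=u-u_0$ already have disjoint supports in $[0,t)$ and $(l-t,l]$, and one inverts separately. For $l/2<t\leq l$, given arbitrary $u\in C^{\infty}[0,l]$, pick a smooth cutoff $\chi$ on $[l-t,t]$ equal to $1$ near $l-t$ and equal to $0$ near $t$, and set $u_0:=u$ on $[0,l-t]$, $u_0:=u\chi$ on $[l-t,t]$, $u_0:=0$ on $[t,l]$, $u_l:=u-u_0$; both pieces are smooth on $[0,l]$ with the required supports, and one inverts as before. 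The case $t>l$ follows from the monotonicity $\mathcal U_{L_0}^t\supseteq\mathcal U_{L_0}^{t'}$ for $t\geq t'$.

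\paragraph{Main obstacle.} The delicate point is compatibility of the splitting with the class $\dot{\mathcal M}$: the cutoff $\chi$ (and the splitting of $u$ in general) must vanish \emph{flatly} near the inner endpoint, not just to finite order, so that the Volterra inversion yields controls supported strictly away from $0$; this hinges on the smoothness of the kernels $w_0,w_l$ granted by the representation \eqref{u^f repres}. Beyond this, the inclusions reduce to careful bookkeeping of supports.
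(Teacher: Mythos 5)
Your proof is correct and follows essentially the same route as the paper: the forward inclusion from the representation \eqref{u^f repres} (with the bootstrap $\mathcal U_{L_0}^t\subseteq\bigcap_k{\rm Dom}\,(L_0^*)^k=C^\infty[0,l]$ covering $t>l$, which the paper simply asserts), and the reverse inclusion by splitting $u=u_0+u_l$ with supports in $[0,t)$ and $(l-t,l]$ and inverting the resulting second-kind Volterra equations, plus monotonicity of reachable sets for $t>l$. The only cosmetic difference is your explicit cutoff construction of the splitting for $\frac l2<t\leqslant l$, which plays the same role as the paper's choice of $u_0$ agreeing with $u$ on $[0,\frac l2]$.
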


\begin{proof}
Let $t\leqslant \frac l2$. One can see from the expression \eqref{u^f repres} that for $f_0,f_l\in\dot{\mathcal M}$ the solution $u^h(\cdot,t)$ belongs to $C^{\infty}[0,l]$. It also follows that its support is contained in $[0,t)\cup(l-t,l]$. Thus
$$
    \mathcal U_{L_0}^t\subseteq\{u\in C^{\infty}[0,l]:\ {\rm supp}\, u\subseteq[0,t)\cup(l-t,l]\}.
$$
To prove the inverse inclusion, take $u$ from the right-hand side and show that $u(x)=u^f(x,t)$. Let us represent $u$ in the form
$$
    u=u_0+u_l,\ u_0,u_l\in C^{\infty}[0,l], {\rm supp}\,u_0\subseteq[0,t),{\rm supp}\,u_l\subseteq(l-t,l].
$$
Divide the equation $u(x)=u^f(x,t)$, according to \eqref{u^f repres}, into two parts as follows:
$$
    f_0(t-x)+\int_x^t w_0(x,s)f_0(t-s)ds=u_0(x),
$$
$$
    f_l(t-l+x)+\int_{l-x}^t w_l(l-x,s)f_l(t-s)ds=u_l(x).
$$
These are Volterra equations of the second kind on the interval $(0,l)$, they have solutions from the same classes, to which their right-hand sides belong (taking into account change of the variable; ${\rm supp}\,f_0, {\rm supp}\,f_l\subseteq(0,t]$, they can be continued to $\dot{\mathcal M}$, which will not affect the equality $u(x)=u^f(x,t)$). Thus the first assertion of the lemma is proved.

Let $\frac l2<t\leqslant l$ and $u\in C^{\infty}[0,l]$. There exists a function $u_0\in C^{\infty}[0,l]$ such that $u_0|_{[0,\frac l2]}=u|_{[0,\frac l2]}$ and ${\rm supp}\,u_0\subseteq [0,t)$. Take $u_l=u-u_0$. Then $u_l\in C^{\infty}[0,l]$ and ${\rm supp}\,u_l\subseteq[\frac l2,l]\subseteq(l-t,l]$. By the same argument as in the first part of the proof we obtain controls $f_0,f_l\in\dot{\mathcal M}$ for which $u(x)=u^f(x,t)$. Consequently, $C^{\infty}[0,l]\subseteq \mathcal U_{L_0}^t$. From \eqref{u^f repres} it follows that $\mathcal U_{L_0}^t\subseteq C^{\infty}[0,l]$.

For $t>l$ the inclusion $\mathcal U_{L_0}^l=C^{\infty}[0,l]\subseteq \mathcal U_{L_0}^t$ holds owing to monotonicity of reachable sets, and the inverse inclusion $\mathcal U_{L_0}^t\subseteq C^{\infty}[0,l]$ is always true. Thus the lemma is proved.
\end{proof}

The system $\alpha_{L_0}$ is controllable, since $\overline{\mathcal U}_{L_0}=L_2(0,l)=\mathcal H$, and this also follows from the fact that the operator $L_0$ is completely non-selfadjoint. Closure of $C^{\infty}[0,l]$ in the graph norm of the operator $L_0^*$ is the Sobolev space $H^2(0,l)={\rm Dom}\, L_0^*$, therefore the wave part of the operator $L_0^*$ (which is $\overline{L_0^*|_{\mathcal U_{L_0}}}$) coincides with $L_0^*$.

\subsection{The wave spectrum}
We turn to constructing the wave spectrum of the operator $L_0$. For this we have already found the family of reachable subspaces $\overline{\mathcal U^t_{L_0}}=L_2((0,t)\cup(l-t,l))$. Now we have to find out how the wave isotony $I_L$ acts.

For a set $E\subset[0,l]$ denote by $E^t$ its metric neighborhood in $[0,l]$:
$$
    E^t=\{x\in[0,l]:\ {\rm dist}(x,E)<t\},\ t>0,
$$
$$
    {\rm dist\,}(x,E):=\underset{y\in E}{\rm inf}\,{\rm dist}(x,y).
$$
For $t=0$ we take $E^t=E$.

\begin{Lemma}\label{Lemma Basic}
    For $0\leqslant a<b\leqslant l$ and $t\geqslant 0$ the following holds:
    \begin{equation}\label{isotony on interval}
        I^t_L(L_2(a,b))=L_2((a,b)^t).
    \end{equation}
\end{Lemma}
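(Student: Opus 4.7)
The plan is to prove the two inclusions separately: the forward one by finite speed of propagation for the wave equation \eqref{dual 1}--\eqref{dual 2}, the reverse one by a duality argument combined with Holmgren uniqueness for the same equation.

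For $I_L^t(L_2(a,b))\subseteq L_2((a,b)^t)$, I would take $g\in C^\infty([0,t];L_2(a,b))$ and note that $v^g$ solves $v_{tt}-v_{xx}+qv=g$ on $(0,l)\times(0,\infty)$ with zero initial data and Dirichlet boundary conditions, since $L$ is the Friedrichs extension. As the intrinsic metric on $[0,l]$ coincides with the Euclidean one, finite speed of propagation for this mixed problem gives $\mathrm{supp}\,v^g(\cdot,t)\subseteq\bigcup_{s\in[0,t]}(\mathrm{supp}_x\,g(\cdot,s))^{t-s}\subseteq(a,b)^t$, and closure yields the inclusion.

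For the opposite inclusion, let $u\perp\mathcal V_L^t(L_2(a,b))$. Using self-adjointness of $L^{-1/2}\sin(\tau L^{1/2})$,
$$
0=(u,v^g(t))_{\mathcal H}=\int_0^t\bigl(L^{-1/2}\sin((t-s)L^{1/2})u,\,g(s)\bigr)_{\mathcal H}\,ds
$$
for every $g\in C^\infty([0,t];L_2(a,b))$. Arbitrariness of $g$ forces the function $w(\tau):=L^{-1/2}\sin(\tau L^{1/2})u$ to vanish on $(a,b)$ for all $\tau\in[0,t]$. This $w$ is the solution of $w_{tt}+Lw=0$ with $w(0)=0$, $w_t(0)=u$, so extending it to negative $\tau$ by oddness (consistent with $w(0)=0$) gives a solution of $w_{tt}-w_{xx}+qw=0$ on $(0,l)\times\mathbb R$ that vanishes on $(a,b)\times[-t,t]$.

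The crux is Holmgren's uniqueness theorem for the 1D equation $w_{tt}-w_{xx}+qw=0$: vanishing on the open rectangle $(a,b)\times(-t,t)$ propagates along the two characteristic families $x\pm\tau=\mathrm{const}$ and forces $w\equiv 0$ on the characteristic hull $\{(x,\tau):\mathrm{dist}(x,(a,b))+|\tau|<t\}$. Restricting to $\tau=0$ and differentiating recovers $u=w_t(0)$ vanishing on $(a,b)^t$. Hence $\mathcal V_L^t(L_2(a,b))^\perp\subseteq L_2([0,l]\setminus(a,b)^t)$, which combined with the first step gives the equality (both sides being closed subspaces). I expect the main obstacle to be the Holmgren step in the case when $(a,b)^t$ touches the Dirichlet endpoints $0$ or $l$; the natural remedy is the method of images (odd reflection across the endpoints, which is consistent with the Dirichlet condition), reducing the argument to the standard free-space Holmgren theorem on $\mathbb R$, where the 1D characteristic structure gives the conclusion directly.
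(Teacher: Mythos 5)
Your proof is correct and follows essentially the paper's own route: the forward inclusion via finite speed of propagation, and the reverse inclusion via the same duality — the paper writes it through the adjoint final-value problem \eqref{d-dual 1 SL}--\eqref{d-dual 3 SL} and the relation \eqref{duality relation}, which is exactly your functional-calculus identity for $L^{-1/2}\sin(\tau L^{1/2})u$ — followed by odd time reflection and propagation of zeros along characteristics. Two small remarks: since $q$ is only $C^\infty$, justify the continuation step by the elementary one-dimensional sideways energy (or Riemann-function/Tataru) uniqueness rather than Holmgren proper, and note that this propagation into the characteristic hull is purely local inside $(0,l)$, so the reflection at the Dirichlet endpoints you worried about is not actually needed.
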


\begin{Remark}
    We identify spaces $L_2(a,b)$ with the subspaces of $L_2(0,l)$ which consist of functions that vanish a.e. outside $(a,b)$.
\end{Remark}

\begin{proof}
The system \eqref{dual 1}--\eqref{dual 2} can be written in the form of the initial-boundary value problem
\begin{align}
    \label{dual 1 SL} & v_{tt}-v_{xx}+qv=g, && x\in(0,l),t>0,
    \\
    \label{dual 2 SL} & v|_{t=0}=v_t|_{t=0}=0, && x\in[0,l],
    \\
    \label{dual 3 SL} & v|_{x=0}=v|_{x=l}=0, && t\geqslant0.
\end{align}
with the right-hand side $g(x,t)$ from the corresponding class.

An argument analogous to the proof of Lemma 2 from \cite{Belishev-Simonov-2017}, which is based on the fact of finiteness of the domain of influence for the hyperbolic equation \eqref{dual 1 SL}, leads to the inclusion ${\mathcal V}^t_L(L_2(a,b))\subseteq L_2((a,b)^t)$ and hence to $I_L^t(L_2(a,b))\subseteq L_2((a,b)^t)$.

Consider the conjugate problem
\begin{align}
    \label{d-dual 1 SL} & w_{tt}-w_{xx}+qw=0, && x\in(0,l),t\in(0,T),
    \\
    \label{d-dual 2 SL} & w|_{t=T}=0,\ w_t|_{t=T}=y, && x\in[0,l],
    \\
    \label{d-dual 3 SL} & w|_{x=0}=w|_{x=l}=0, && t\in[0,T].
\end{align}
For $g\in C^{\infty}_0((0,l)\times(0,\infty))$ and $y \in L_2(0,l)$ the duality relation
\begin{equation}\label{duality relation}
    \int_0^l\int_0^Tg(x,t)w^y(x,t)dxdt=-\int_0^lv^g(x,T)y(x)dx
\end{equation}
holds. The odd continuation of the solution $w^y$ solves the problem
\begin{align}
    \label{dd-dual 1 SL} & w_{tt}-w_{xx}+qw=0, && x\in(0,l),t\in(0,2T),
    \\
    \label{dd-dual 2 SL} & w|_{t=T}=0,\ w_t|_{t=T}=y, && x\in[0,l],
    \\
    \label{dd-dual 3 SL} & w|_{x=0}=w|_{x=l}=0, && t\in[0,2T].
\end{align}
(note that both $w^y$ and $w^y_t$ retain continuity). If there exists $y \in L_2((a,b)^T)\ominus {\mathcal V}^T_L(L_2(a,b))$, then an argument analogous to the proof of Lemma 2 from \cite{Belishev-Simonov-2017} leads to $w^y=0$, from which it follows that $y$ can be only zero. Therefore ${\mathcal V}^t_L(L_2(a,b))$ is dense in $L_2((a,b)^t)$. Thus we proved that $I_L^t(L_2(a,b))=L_2((a,b)^t)$.
\end{proof}

Let us call a set $E \subseteq[0,l]$ {\it elementary}, if
$$
    E=\bigcup_{k=1}^{n(E)}(a_k,b_k),
$$
where $0\leqslant a_1<b_1<a_2<b_2<...<a_{n(E)}<b_{n(E)}\leqslant l$ and if the set $E$ is symmetric with respect to the middle of the interval $(0,l)$. Let $\mathcal E[0,l]$ be the family of all elementary sets. Obviously, if $E\in\mathcal E[0,l]$, then $E^t\in\mathcal E[0,l]$ for every $t\geqslant0$. We will also call the subspaces $L_2(E)$, $E\in\mathcal E[0,l]$, {\it elementary}. The family of elementary subspaces forms the lattice ${\mathfrak L}_{\mathcal E[0,l]}\subseteq\mathfrak L({\mathcal H})$.

\begin{Lemma}\label{Lemma Basic 2}
    For every $E\in\mathcal E[0,l]$ one has $I_L^t(L_2(E))=L_2(E^t)$.
\end{Lemma}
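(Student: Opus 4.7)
The plan is to reduce to Lemma \ref{Lemma Basic} by exploiting the fact that an elementary set $E=\bigcup_{k=1}^{n(E)}(a_k,b_k)$ is a finite disjoint union, so $L_2(E)=\bigoplus_{k=1}^{n(E)}L_2(a_k,b_k)$, and the wave isotony $I_L^t$ distributes over such finite joins thanks to linearity of (\ref{dual 1 SL})--(\ref{dual 3 SL}) in the right-hand side $g$.

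First I would establish the inclusion $I_L^t(L_2(E))\subseteq L_2(E^t)$ exactly as in the first half of Lemma \ref{Lemma Basic}. For $g\in C^\infty([0,t];L_2(E))$ the support of $g(\cdot,s)$ lies in $\overline E$ for every $s\in[0,t]$; the finite domain of influence argument for (\ref{dual 1 SL}) (used via Lemma 2 of \cite{Belishev-Simonov-2017}) confines $\mathrm{supp}\,v^g(\cdot,t)$ to $\overline{E^t}$, so $v^g(t)\in L_2(E^t)$, and passing to the closure yields the desired inclusion.

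For the reverse inclusion, note that $L_2(a_k,b_k)\subseteq L_2(E)$; isotony of $I_L^t$ combined with Lemma \ref{Lemma Basic} gives
\[
L_2((a_k,b_k)^t)=I_L^t(L_2(a_k,b_k))\subseteq I_L^t(L_2(E))\qquad\text{for each }k.
\]
The right-hand side is a closed linear subspace, hence contains every finite sum of elements drawn from the $L_2((a_k,b_k)^t)$. Since $E^t=\bigcup_k(a_k,b_k)^t$, an arbitrary $u\in L_2(E^t)$ decomposes as $u=\sum_k u_k$ by restricting $u$ to the disjointification $(a_k,b_k)^t\setminus\bigcup_{j<k}(a_j,b_j)^t$ and extending by zero; each piece $u_k$ lies in $L_2((a_k,b_k)^t)$, so $u\in I_L^t(L_2(E))$, which proves $L_2(E^t)\subseteq I_L^t(L_2(E))$ and completes the argument.

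The main — though quite mild — obstacle is that the neighborhoods $(a_k,b_k)^t$ need not be pairwise disjoint when $t$ is large enough for them to overlap, so the decomposition of $u\in L_2(E^t)$ is not literally the orthogonal decomposition parallel to $L_2(E)=\bigoplus_k L_2(a_k,b_k)$; the explicit disjointification above bypasses this. Nothing else requires care, since the two ingredients — finite propagation speed on the ``$\subseteq$'' side and Lemma \ref{Lemma Basic} combined with isotony on the ``$\supseteq$'' side — do all of the work, and the symmetry assumption built into the definition of $\mathcal E[0,l]$ is not used at this step.
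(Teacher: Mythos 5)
Your proposal is correct and follows essentially the same route as the paper: the inclusion $I_L^t(L_2(E))\subseteq L_2(E^t)$ by the finite domain of influence argument of Lemma \ref{Lemma Basic}, and the reverse inclusion from $L_2((a_k,b_k)^t)=I_L^t(L_2(a_k,b_k))\subseteq I_L^t(L_2(E))$ via isotonicity. The only difference is that you spell out the step the paper leaves implicit, namely decomposing an arbitrary $u\in L_2(E^t)$ over the possibly overlapping neighborhoods $(a_k,b_k)^t$ by disjointification, which is a welcome clarification rather than a deviation.
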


\begin{proof}
By isotonicity,
$$
    L_2((a_k,b_k)^t)=I^t_L(L_2(a_k,b_k))\subseteq I_L^t(L_2(E))
$$
for every $k$, and thus $L_2(E^t)\subseteq I_L^t(L_2(E))$. Using the same argument as in the proof of Lemma \ref{Lemma Basic}, we arrive to
$$
    I_L^t(L_2(E))=\overline{\mathcal V^t_L(L_2(E))}=L_2(E^t).
$$
\end{proof}

The lattice $\mathfrak L_{\mathcal E[0,l]}$ is invariant under the wave isotony $I_L$ and contains all the subspaces of the form $L_2((0,t)\cup(l-t,l))$, i.e., all reachable subspaces. Therefore $\mathfrak L_{L_0}=\mathfrak L_{\mathcal E[0,l]}$.

Let $m$ denote the Lebesgue measure, $\mathcal B$ the Borel sigma-algebra on the segment $[0,l]$, $\mathfrak L_{\mathcal B}$ the corresponding lattice of subspaces,
$$
    \mathfrak L_{\mathcal B}:=\{L_2(E),\ E\in\mathcal B\}\subseteq\mathfrak L({\mathcal H}),
$$
$E\triangle F=(E\setminus F)\cup(F\setminus E)$ the symmetric difference of sets.

\begin{Lemma}\label{Lemma convergence in L(H)}
    Let $\{E_n\}_{n\in\mathbb N}$ be a sequence of sets from $\mathcal B$ and $E\in\mathcal B$. Then convergence $L_2(E_n)\underset{n\rightarrow\infty}\longrightarrow L_2(E)$ in the topology of $\mathfrak L({\mathcal H})$ is equivalent to $m(E_n\triangle E)\underset{n\to\infty}\longrightarrow0$.
\end{Lemma}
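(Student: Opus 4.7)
The plan rests on identifying the orthogonal projection onto $L_2(E)\subseteq L_2(0,l)$ with the multiplication operator by the characteristic function $\chi_E$: for any $f\in L_2(0,l)$ one has $P_{L_2(E)}f=\chi_E f$. Convergence $L_2(E_n)\to L_2(E)$ in $\mathfrak L(\mathcal H)$ means by definition $P_{L_2(E_n)}\overset{s}\to P_{L_2(E)}$, i.e.\ $\chi_{E_n}f\to\chi_E f$ in $L_2(0,l)$ for every $f\in L_2(0,l)$, so the entire statement is a translation between strong convergence of these multiplication projections and convergence of the $E_n$ in symmetric difference.

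For the implication $m(E_n\triangle E)\to 0\Rightarrow L_2(E_n)\to L_2(E)$, I would fix $f\in L_2(0,l)$ and compute
$$
\|P_{L_2(E_n)}f-P_{L_2(E)}f\|_{L_2}^2=\int_0^l|\chi_{E_n}(x)-\chi_E(x)|^2|f(x)|^2\,dx=\int_{E_n\triangle E}|f(x)|^2\,dx.
$$
The finite measure $\nu(A):=\int_A|f|^2\,dm$ is absolutely continuous with respect to $m$ on $[0,l]$, so $m(E_n\triangle E)\to 0$ forces $\nu(E_n\triangle E)\to 0$, which gives the required strong convergence.

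For the converse, I would exploit the fact that $l<\infty$, so the constant function $\mathbf 1\in L_2(0,l)$. Applying the assumed strong convergence to $f=\mathbf 1$ yields
$$
m(E_n\triangle E)=\|\chi_{E_n}-\chi_E\|_{L_2}^2=\|P_{L_2(E_n)}\mathbf 1-P_{L_2(E)}\mathbf 1\|_{L_2}^2\;\longrightarrow\;0,
$$
which is exactly the claim. No serious obstacle arises; the only point requiring a little care is the availability of $\mathbf 1\in L_2(0,l)$, which is ensured by the finiteness of the underlying interval (on an infinite interval one would need to test with characteristic functions of bounded subsets and conclude via local absolute continuity).
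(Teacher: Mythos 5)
Your proof is correct and follows the same standard route as the argument the paper invokes (it cites Lemma 4 of the half-line paper rather than spelling it out): identify $P_{L_2(E)}$ with multiplication by $\chi_E$, use absolute continuity of $\int_A|f|^2\,dm$ for one direction, and test strong convergence on a suitable function for the other. Your use of the constant function $\mathbf 1$ is a legitimate finite-interval shortcut (on the half-line one must test with characteristic functions of bounded sets, as you note), so no changes are needed.
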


The proof of the lemma repeats the proof of Lemma 4 from \cite{Belishev-Simonov-2017} almost literally.

\begin{Lemma}\label{Lemma closure of L-L-0 is in L-Leb}
    The closure of the lattice $\mathfrak L_{L_0}$ in the topology of $\mathfrak L({\mathcal H})$ is a subset of the lattice $\mathfrak L_{\mathcal B}$:
    $$
        \overline{\mathfrak L}_{L_0}\subseteq\mathfrak L_{\mathcal B}.
    $$
\end{Lemma}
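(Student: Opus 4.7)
The plan is to take an arbitrary $\mathcal G \in \overline{\mathfrak L}_{L_0}$ and explicitly exhibit a Borel set $E \subseteq [0,l]$ such that $\mathcal G = L_2(E)$. Since the strong operator topology on orthogonal projections satisfies the first axiom of countability, the topological closure coincides with the sequential closure, so there exist elementary sets $E_n \in \mathcal E[0,l]$ with $P_{L_2(E_n)} \overset{s}\to P_{\mathcal G}$.

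First I would apply this strong convergence to the constant function $u_0 \equiv 1 \in L_2(0,l)$. Since $P_{L_2(E_n)} u_0$ is precisely the indicator function $\mathbf 1_{E_n}$, we obtain $\mathbf 1_{E_n} \to \phi := P_{\mathcal G} u_0$ in $L_2(0,l)$. In particular the sequence $\{\mathbf 1_{E_n}\}$ is Cauchy in $L_2$, which reads $m(E_n \triangle E_m) \to 0$ as $n, m \to \infty$.

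Next I would pass to a subsequence (which I keep denoting $E_n$) along which $\mathbf 1_{E_n}(x) \to \phi(x)$ for a.e. $x \in [0,l]$. Since each $\mathbf 1_{E_n}(x) \in \{0,1\}$, the a.e. limit $\phi$ takes values in $\{0,1\}$ almost everywhere, so $\phi = \mathbf 1_E$ for the Borel set $E := \{x \in [0,l] : \phi(x) = 1\}$. Then $m(E_n \triangle E) = \|\mathbf 1_{E_n} - \mathbf 1_E\|_{L_2}^2 \to 0$, and by Lemma \ref{Lemma convergence in L(H)} the subspaces $L_2(E_n)$ converge to $L_2(E)$ in the topology of $\mathfrak L(\mathcal H)$. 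Uniqueness of the strong limit of projections forces $\mathcal G = L_2(E) \in \mathfrak L_{\mathcal B}$, proving the desired inclusion.

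The core idea is to compress the full sequence of subspaces into the single sequence of indicator functions in $L_2$, exploit the fact that a pointwise (a.e.) limit of $\{0,1\}$-valued functions is again $\{0,1\}$-valued, and then rebroadcast this through Lemma \ref{Lemma convergence in L(H)}. I do not expect a substantive obstacle: the only point requiring care is to extract an a.e.\ convergent subsequence, so that the $L_2$ limit $\phi$ is itself an indicator function, which is what identifies the Borel set $E$.
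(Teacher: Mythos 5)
Your proof is correct. It follows the same overall strategy as the paper---reduce convergence of the subspaces $L_2(E_n)$ to convergence of the sets $E_n$ in the symmetric-difference pseudometric and then invoke Lemma \ref{Lemma convergence in L(H)}---but you produce the limit set by a different mechanism. The paper takes a fundamental sequence and appeals to completeness of the metric space $\mathcal B/_{\sim}$ (citing Kolmogorov--Fomin) to obtain a measurable $E$ with $m(E_n\triangle E)\to 0$; you instead test the projections on the constant function $1$, so that $P_{L_2(E_n)}1=\mathbf 1_{E_n}\to\phi:=P_{\mathcal G}1$ in $L_2(0,l)$, and identify $\phi$ as an indicator function by extracting an a.e.\ convergent subsequence. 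This is in effect an inline proof of the completeness fact the paper cites, and it has a small additional virtue: it makes fully rigorous the passage from strong convergence of the projections to smallness of $m(E_n\triangle E_m)$, a step the paper's proof treats somewhat loosely (Lemma \ref{Lemma convergence in L(H)} as stated presupposes a limit already of the form $L_2(E)$, whereas your test-vector computation needs no such assumption, and the limit is then pinned down by uniqueness of strong limits of projections). One cosmetic remark: $\phi$ is defined only up to a null set, so to get a genuinely Borel $E$ you should pick a Borel representative of $\phi$ (or observe that a Lebesgue measurable set coincides with a Borel set up to measure zero and $L_2(E)$ is unaffected); the paper itself claims only measurability of $E$ at this point.
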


\begin{proof}
Let a sequence $L_2(E_n)$ of subspaces from $\mathfrak L_{L_0}=\mathfrak L_{\mathcal E[0,l]}$ be fundamental in $\mathfrak L({\mathcal H})$. Let us prove that there exists $E\in\mathcal B$ such that $L_2(E_n)\overset{\mathfrak L({\mathcal H})}\longrightarrow L_2(E)$. By Lemma \ref{Lemma convergence in L(H)}, convergence means that $m(E_n\triangle E)\rightarrow0$. The symmetric difference is a pseudometric in $\mathcal B$ and, after factorization with respect to the equivalence relation of the form $E\sim F$, if $m(E\triangle F)=0$, we get $\mathcal B/_{\sim}$, \cite{Kolmogorov-Fomin-1989}. Thus there exists a measurable set $E\subseteq[0,l]$ such that $m(E_n\triangle E)\rightarrow0$, and by Lemma \ref{Lemma convergence in L(H)} this means that $L_2(E_n)\overset{\mathfrak L({\mathcal H})}\longrightarrow L_2(E)$.
\end{proof}

\begin{Remark}
The set $E$ should be symmetric (up to a set of zero measure) with respect to the middle of the interval $(0,l)$, and therefore $\overline{\mathfrak L}_{L_0}\neq\mathfrak L_{\mathcal B}$.
\end{Remark}

\begin{Corollary}
    Functions of the family $[I_L\mathfrak L_{L_0}]_{\rm seq}$ are isotonic and take values in $\mathfrak L_{\mathcal B}$.
\end{Corollary}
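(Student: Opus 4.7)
The Corollary has two assertions, and both should follow quickly by combining the lemmas already established.

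For the isotonicity, the plan is simply to invoke Lemma \ref{Lem iso}: since $I_L$ is a wave isotony of the lattice $\mathfrak L(\mathcal H)$ (Proposition \ref{Prop 1}), and $\mathfrak L_{L_0} \subseteq \mathfrak L(\mathcal H)$, any sequential limit in $\mathfrak F(\mathcal H)$ of functions of the form $t \mapsto I_L^t(\mathcal G)$ with $\mathcal G \in \mathfrak L_{L_0}$ is an isotonic function. Nothing more is required here, since the proof of Lemma \ref{Lem iso} did not use any property of the particular sublattice generating the functions.

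For the second assertion, I would take $f \in [I_L \mathfrak L_{L_0}]_{\rm seq}$ and fix an arbitrary $t \geq 0$. By definition of sequential closure there exists a sequence $\{\mathcal G_n\}_{n\in\mathbb N} \subset \mathfrak L_{L_0}$ such that $I_L^t(\mathcal G_n) \to f(t)$ in $\mathfrak L(\mathcal H)$. Since $\mathfrak L_{L_0} = \mathfrak L_{\mathcal E[0,l]}$, we can write $\mathcal G_n = L_2(E_n)$ for some $E_n \in \mathcal E[0,l]$. Lemma \ref{Lemma Basic 2} then gives $I_L^t(\mathcal G_n) = L_2(E_n^t)$, and since $E_n^t \in \mathcal E[0,l]$ whenever $E_n \in \mathcal E[0,l]$, each of these subspaces already belongs to $\mathfrak L_{L_0}$. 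Hence $f(t) \in \overline{\mathfrak L}_{L_0}$, and Lemma \ref{Lemma closure of L-L-0 is in L-Leb} yields $f(t) \in \mathfrak L_{\mathcal B}$.

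There is no real obstacle here, since both claims are essentially bookkeeping on top of Lemmas \ref{Lem iso}, \ref{Lemma Basic 2}, and \ref{Lemma closure of L-L-0 is in L-Leb}. The only point that requires a moment's care is the verification that $E_n^t \in \mathcal E[0,l]$ when $E_n \in \mathcal E[0,l]$, which amounts to noting that the $t$-neighborhood of a finite union of disjoint open intervals symmetric about $l/2$ is again such a union (possibly with fewer intervals, after adjacent neighborhoods merge), and symmetry about $l/2$ is preserved under metric enlargement. Given this, the conclusion $f(t) \in \mathfrak L_{\mathcal B}$ is immediate for each $t$, so $f$ as a function takes values in $\mathfrak L_{\mathcal B}$, completing the proof.
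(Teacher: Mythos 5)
Your proposal is correct and follows essentially the same route the paper intends: isotonicity comes from Lemma \ref{Lem iso} (whose proof indeed does not depend on which sublattice generates the functions), and the membership $f(t)\in\mathfrak L_{\mathcal B}$ comes from writing $I_L^t(L_2(E_n))=L_2(E_n^t)$ with $E_n^t$ elementary (Lemma \ref{Lemma Basic 2}) and applying Lemma \ref{Lemma closure of L-L-0 is in L-Leb} to the limit. The paper treats the corollary as an immediate consequence of exactly these facts, so your bookkeeping, including the observation that $E^t$ stays elementary, is just the intended argument made explicit.
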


Consider the metric space $\mathcal B/_{\sim}$ of equivalence classes of measurable sets with the distance $\rho(E_{\sim},F_{\sim})=m(E\triangle F)$ and for each $t>0$ consider the following sets in it:
\begin{multline*}
    \mathcal E_{>t}:=\{E^t,\ E\in\mathcal E[0,l]\}
    \\=\{E\in\mathcal E[0,l]:\ b_1>t,\text{ if }a_1=0,\ b_1-a_1>2t,\text{ if }a_1\neq0,
    \\\text{ and }b_k-a_k>2t,\forall k=2,...,n-1\},
\end{multline*}
\begin{multline*}
    \mathcal E_{\geqslant t}:=\{E\in\mathcal E[0,l]:\ b_1\geqslant t,\text{ if }a_1=0,\ b_1-a_1\geqslant2t,\text{ if }a_1\neq0,
    \\\text{ and }b_k-a_k\geqslant2t,\forall k=2,...,n-1\}.
\end{multline*}
Recall that elementary sets are symmetric with respect to the middle of the interval $(0,l)$.

\begin{Lemma}\label{Lemma closure >t}
    Closure of $(\mathcal E_{>t})_{\sim}$ in the metric of $\mathcal B/_{\sim}$ is a subset of $(\mathcal E_{\geqslant t})_{\sim}$.
\end{Lemma}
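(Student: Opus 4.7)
The plan is to extract from a convergent sequence in $(\mathcal E_{>t})_{\sim}$ convergent subsequences of all interval endpoints and verify that the limit configuration satisfies the weak length conditions defining $\mathcal E_{\geqslant t}$. Let $E_n\in\mathcal E_{>t}$ with $m(E_n\triangle E)\to 0$ for some $E\in\mathcal B$, and write each $E_n=\bigcup_{k=1}^{n_{E_n}}(a_k^{(n)},b_k^{(n)})$ according to the definition of an elementary set.

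The first step is a uniform bound on the number of intervals: every middle interval has length $>2t$, and the first and last together contribute more than $2t$ (either $>t$ each when $a_1^{(n)}=0$, or $>2t$ each when $a_1^{(n)}>0$), hence $(n_{E_n}-1)\cdot 2t<l$ and $n_{E_n}\leqslant\lceil l/(2t)\rceil+1$. Passing to a subsequence one may assume $n_{E_n}=N$ is constant and the boundary-attachment pattern (whether $a_1^{(n)}=0$) is the same for all $n$. Since all endpoints lie in $[0,l]$, a further diagonal extraction yields $a_k^{(n)}\to a_k^*$ and $b_k^{(n)}\to b_k^*$ for each $k=1,\ldots,N$, with $a_k^*\leqslant b_k^*\leqslant a_{k+1}^*$. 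Setting $\tilde E:=\bigcup_{k=1}^N(a_k^*,b_k^*)$, the estimate $m(E_n\triangle\tilde E)\leqslant\sum_{k=1}^N(|a_k^{(n)}-a_k^*|+|b_k^{(n)}-b_k^*|)\to 0$ together with uniqueness of limits in the pseudometric $\rho$ gives $E=\tilde E$ modulo a null set.

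It remains to check that $\tilde E$ belongs (up to a null set) to $\mathcal E_{\geqslant t}$. The length inequalities pass to the limit with $>$ replaced by $\geqslant$; in particular each interval $(a_k^*,b_k^*)$ has positive length (at least $t$ if it is a boundary-attached end-piece, at least $2t$ otherwise), so no interval degenerates to a point. If adjacent intervals touch in the limit, $b_k^*=a_{k+1}^*$, then $(a_k^*,b_k^*)\cup(a_{k+1}^*,b_{k+1}^*)$ differs from the open interval $(a_k^*,b_{k+1}^*)$ only at a single point, so the two pieces should be identified as one; the merged length is then at least $4t$ (or at least $3t$, resp.\ $2t$, when a boundary-attached piece is involved), which still satisfies the relevant condition of $\mathcal E_{\geqslant t}$. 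Symmetry with respect to $l/2$ is inherited from each $E_n$: the symmetry relations $a_k^{(n)}+b_{N-k+1}^{(n)}=l=b_k^{(n)}+a_{N-k+1}^{(n)}$ survive pointwise convergence, so the (post-merge) limit is symmetric. Hence $\tilde E\in\mathcal E_{\geqslant t}$.

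The main bookkeeping difficulty is the merging step together with the case where $a_1^{(n)}>0$ collapses to $a_1^*=0$ in the limit, which switches the applicable condition from $\geqslant 2t$ to $\geqslant t$; both are harmless because the uniform lower bounds on lengths are strong enough to survive the merge, and $\geqslant 2t$ already implies $\geqslant t$. Everything else is a standard compactness argument on the finite tuple of endpoints.
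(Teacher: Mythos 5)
Your proof is correct and follows essentially the same route as the paper's: pass to a subsequence with a fixed number of component intervals, extract convergent endpoints, bound $m(E_n\triangle\tilde E)$ by the sum of the endpoint deviations, and identify the limit class with an element of $(\mathcal E_{\geqslant t})_{\sim}$. The only difference is that you spell out the bookkeeping (merging of touching intervals, the case $a_1^{(n)}>0$ with $a_1^*=0$, and symmetry passing to the limit) that the paper leaves implicit when it asserts $E_{\infty}\in\mathcal E_{\geqslant t}$.
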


\begin{proof}
Let $\{E_n\}_{n\in\mathbb N}$ be a sequence from $\mathcal E_{>t}$ such that $(E_n)_{\sim}\overset{\mathcal B/_{\sim}}\longrightarrow E_{\sim}\in\mathcal B /_{\sim}$. Each of the sets $E_n$ contains no more than $\frac lt$ component intervals. One can choose a subsequence $\{E_{n_j}\}_{j\in\mathbb N}$ of sets which all contain the same number of component intervals. Denote this number by $N$.
\begin{figure}[h]
    \includegraphics[width=\textwidth]{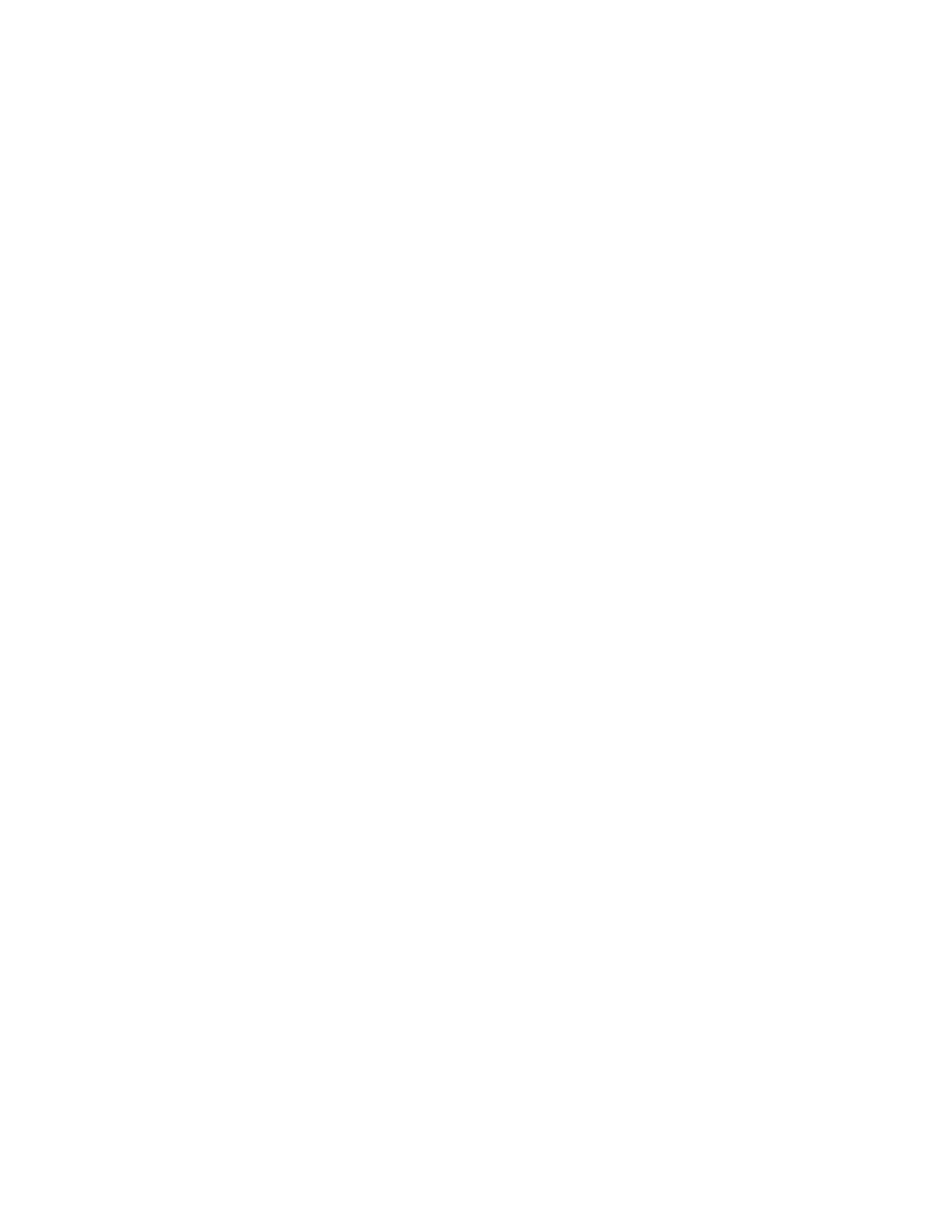}
    \caption{The set $E_{n_j}$}
    \label{fig1_limiting_interval}
\end{figure}
One can choose a subsequence $\{E_{n_{j_l}}\}_{l\in\mathbb N}$ of sets such that all the endpoints of the component intervals $\{a(E_{n_{j_l}})\}_{l\in\mathbb N}$, $\{b(E_{n_{j_l}})\}_{l\in\mathbb N}$     converge to some numbers $0\leqslant a_1\leqslant b_1\leqslant a_2\leqslant b_2\leqslant ...\leqslant a_N\leqslant b_N\leqslant l$ (see Fig. \ref{fig1_limiting_interval} and \ref{fig2_limit_interval}).
\begin{figure}[h]
    \includegraphics[width=\textwidth]{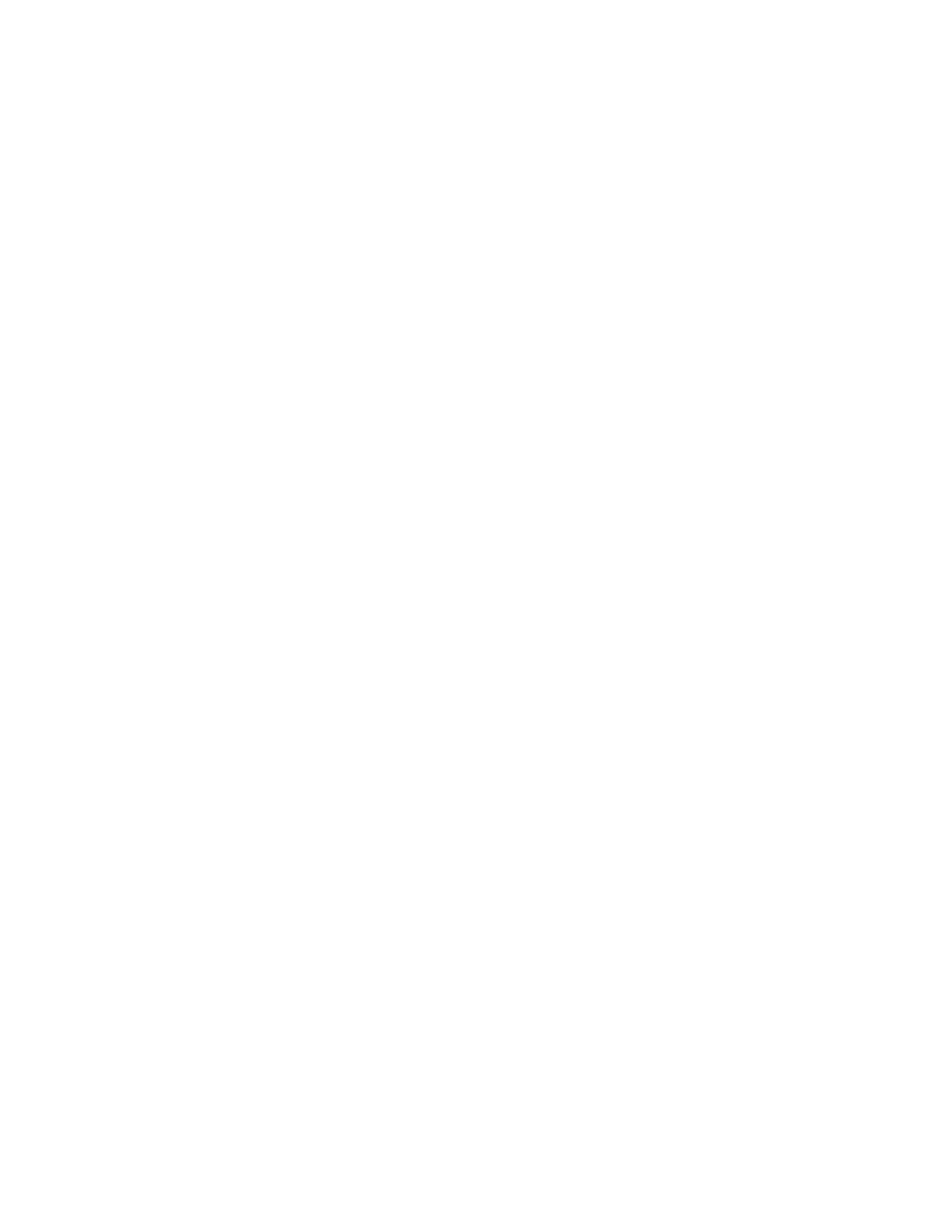}
    \caption{The set $E_{\infty}$}
    \label{fig2_limit_interval}
\end{figure}
In this way we obtain the set
$$
    E_{\infty}:=\bigcup_{k=1}^{N}(a_k,b_k).
$$
\begin{figure}[h]
    \begin{center}
    \includegraphics[width=0.7\textwidth]{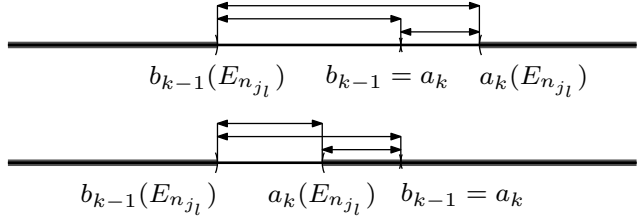}
    \caption{Estimating the measure of the symmetric difference}
    \label{fig3_estimate}
    \end{center}
\end{figure}
It is easy to see the following estimate (see Fig. \ref{fig3_estimate}):
$$
    m(E_{n_{j_l}}\triangle E_{\infty})\leqslant\sum_{k=1}^N(|a_k(E_{n_{j_l}})-a_k|+|b_k(E_{n_{j_l}})-b_k|).
$$
Consequently, $m(E_{n_{j_l}}\triangle E_{\infty})\underset{l\to\infty}\longrightarrow0$. This means that $E_{\sim}=(E_{\infty})_{\sim}$. Since $E_{\infty}\in\mathcal E_{\geqslant t}$, the sequence $\{E_n\}_{n\in\mathbb N}$ was an arbitrary convergent sequence from $\mathcal E_{>t}$, the lemma is proved.
\end{proof}

\begin{Lemma}\label{Lemma inclusion}
    Let $E,F\subseteq[0,l]$, $\{E_n\}_{n\in\mathbb N}$, $\{F_n\}_{n\in\mathbb N}$ be sequences of subsets of the segment $[0,l]$. Let $m(E_n\triangle E)\to0$ and $m(F_n\triangle F)\to0$ as $n\to\infty$. If $E_n\subseteq F_n$ for every $n$, then $m(E\setminus F)=0$.
\end{Lemma}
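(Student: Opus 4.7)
The plan is to exploit the elementary set-theoretic identity that for any $n$,
$$
E\setminus F\ \subseteq\ (E\setminus E_n)\ \cup\ (E_n\setminus F_n)\ \cup\ (F_n\setminus F).
$$
This is just the observation that if $x\in E\setminus F$, then either $x\notin E_n$ (so $x\in E\setminus E_n$), or $x\in E_n$; in the latter case, since $E_n\subseteq F_n$, we have $x\in F_n$, and since $x\notin F$, we obtain $x\in F_n\setminus F$. The middle term $E_n\setminus F_n$ is empty by hypothesis, so really
$$
E\setminus F\ \subseteq\ (E\setminus E_n)\cup(F_n\setminus F)\ \subseteq\ (E_n\triangle E)\cup(F_n\triangle F).
$$

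Taking Lebesgue measure and using subadditivity,
$$
m(E\setminus F)\ \leqslant\ m(E_n\triangle E)+m(F_n\triangle F).
$$
The right-hand side tends to $0$ as $n\to\infty$ by hypothesis, while the left-hand side does not depend on $n$, so $m(E\setminus F)=0$.

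There is no real obstacle here: the argument is purely set-theoretic plus monotonicity and subadditivity of $m$. The only thing to be careful about is the measurability of $E\setminus F$ in order to apply $m$; this is immediate if $E,F\in\mathcal B$, and in the context where this lemma is used (extracting limit sets in $\mathcal B/_\sim$ from sequences in $\mathcal E[0,l]$) this measurability is guaranteed, so no further discussion is needed.
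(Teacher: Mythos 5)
Your proof is correct and follows essentially the same route as the paper: both establish the inclusion $E\setminus F\subseteq(E_n\triangle E)\cup(F_n\triangle F)$ for each $n$ (the paper via a chain of set identities, you via a pointwise case analysis) and then conclude by subadditivity of $m$ and passing to the limit. No issues.
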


\begin{proof}
We have:
\begin{multline*}
    E\setminus F\subseteq(E\cup E_n)\setminus F=(E_n\cup(E\setminus E_n))\setminus F
    =(E_n\setminus F)\cup((E\setminus E_n)\setminus F)
    \\
    \subseteq(F_n\setminus F)\cup(E_n\setminus E)\subseteq(F_n\triangle F)\cup(E_n\triangle E).
\end{multline*}
From this we immediately get the assertion of the lemma.
\end{proof}

For $x\in[0,\frac l2]$ denote
$$
    \omega_x(t):=L_2((\{x\}\cup\{l-x\})^t),
$$
then $\omega_x\in[I_L\mathfrak L_{L_0}]_{\rm seq}$. Indeed, $L_2(((\{x\}\cup\{l-x\})^\frac 1n)^t)\in I_L\mathfrak L_{L_0}$ for every $n$ and $L_2(((\{x\}\cup\{l-x\})^\frac 1n)^t)\underset{n\to\infty}\longrightarrow\omega_x$ in $\mathfrak F(\mathcal H)$.

\begin{Lemma}\label{Lemma main}
For every nonzero $\omega\in[I_L\mathfrak L_{L_0}]_{\rm seq}$ there exists $x\in[0,l]$ such that $\omega_x\leqslant\omega$.
\end{Lemma}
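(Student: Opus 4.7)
My plan is to realize the limit $\omega$ by a concrete sequence of elementary sets, select one point inside each set, extract a convergent subsequence, and deduce the required inclusion from Lemma \ref{Lemma inclusion}.

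First, since $\omega\in[I_L\mathfrak L_{L_0}]_{\rm seq}$ and $\mathfrak L_{L_0}=\mathfrak L_{\mathcal E[0,l]}$, I would choose a sequence $F_n\in\mathcal E[0,l]$ such that $L_2(F_n^t)\to\omega(t)$ in $\mathfrak L(\mathcal H)$ for every $t\geq 0$. Writing $\omega(t)=L_2(E(t))$, Lemma \ref{Lemma convergence in L(H)} gives $m(F_n^t\triangle E(t))\to 0$. Since $\omega$ is nonzero, $m(E(t_0))>0$ for some $t_0>0$, which rules out $F_n$ being empty for infinitely many $n$ (otherwise a subsequence with $F_n=\emptyset$ would force $E(t)=\emptyset$ mod measure zero for every $t$). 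So $F_n\neq\emptyset$ for all large $n$.

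Next I would pick any point $x_n\in F_n$ (say the midpoint of the first component) and, by compactness of $[0,l]$, extract a convergent subsequence $x_n\to x\in[0,l]$; replacing $x$ with $l-x$ if needed, I assume $x\in[0,l/2]$. The key elementary inclusion $(x_n-t,x_n+t)\cap[0,l]\subseteq F_n^t$ holds for every $t\geq 0$ simply because $x_n\in F_n$. As $n\to\infty$, the left-hand side converges in the symmetric-difference metric to $(x-t,x+t)\cap[0,l]$ and the right-hand side to $E(t)$, so Lemma \ref{Lemma inclusion} yields
$$
    (x-t,x+t)\cap[0,l]\subseteq E(t)\quad\text{mod measure zero.}
$$
The symmetric points $l-x_n\in F_n$ (using symmetry of elementary sets about $l/2$) give by the same argument $(l-x-t,l-x+t)\cap[0,l]\subseteq E(t)$ mod measure zero. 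Combining, $(\{x\}\cup\{l-x\})^t\subseteq E(t)$ mod measure zero, so $\omega_x(t)\subseteq\omega(t)$ for every $t>0$; the case $t=0$ is trivial since $\omega_x(0)=\{0\}$.

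The main technical obstacle will be that $\omega$ need not be a first-order sequential limit — $[I_L\mathfrak L_{L_0}]_{\rm seq}$ is in principle an iterated sequential closure. I would handle this by applying the above argument to each first-order approximant $\omega^{(m)}\to\omega$ to produce points $x^{(m)}\in[0,l/2]$ with $\omega_{x^{(m)}}\leq\omega^{(m)}$, then extracting a convergent subsequence $x^{(m)}\to x$ and passing the inclusion $(x^{(m)}-t,x^{(m)}+t)\cap[0,l]\subseteq E^{(m)}(t)$ to the limit once more via Lemma \ref{Lemma inclusion}.
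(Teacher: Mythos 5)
Your proposal is correct and follows essentially the same route as the paper: represent $\omega(t)=L_2(E(t))$ via Lemmas \ref{Lemma convergence in L(H)}--\ref{Lemma closure of L-L-0 is in L-Leb}, observe that the $t$-neighbourhood of a point of the approximating elementary set $F_n$ is contained in $F_n^t$, and pass to the limit with Lemma \ref{Lemma inclusion}, together with the symmetry of elementary sets to handle $l-x$. The only differences are that the paper selects its points by tracking the right endpoint of the first component interval of $E(t_0)$ (using Lemma \ref{Lemma closure >t} and the case distinction $E(t)\equiv[0,l]$ or not), while you take an arbitrary point of $F_n$ and use compactness of $[0,l]$ — a mild simplification — and that your closing remark on iterated sequential closure addresses a point the paper passes over, since its proof likewise treats $\omega$ as a one-step limit of a sequence from $I_L\mathfrak L_{L_0}$.
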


\begin{proof}
Let $\omega\in[I_L\mathfrak L_{L_0}]_{\rm seq}$. This means that there exists a sequence $\{E_n\}_{n\in\mathbb N}$ of elementary sets such that $I^t_L(L_2(E_n))=L_2(E^t_n)\to\omega(t)$ in $\mathfrak L({\mathcal H})$ for every $t\geqslant0$. By Lemma \ref{Lemma closure of L-L-0 is in L-Leb} there exist measurable sets $E(t)\subseteq[0,l]$ such that $\omega(t)=L_2(E(t))$, and by Lemma \ref{Lemma closure >t} $E(t)\in\mathcal E_{\geqslant t}$.

If $E(t)=[0,l]$ for every $t>0$, then the assertion of the lemma holds: every element $\omega_x$, $x\in[0,\frac l2]$, satisfies $\omega_x\leqslant\omega$. Assume that there exists $t_0>0$ such that $E(t_0)\neq[0,l]$. Then for the right endpoint of the first interval the inequality $b_1(E(t_0))>t_0$ holds (two cases are possible, see Fig. \ref{fig4_case1} and \ref{fig5_case2}).
\begin{figure}[h]
    \begin{center}
    \includegraphics[width=0.8\textwidth]{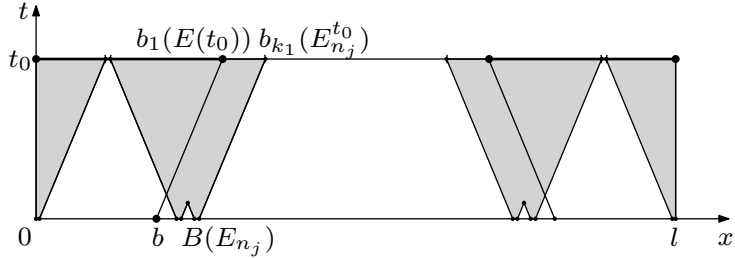}
    \caption{The first case}
    \label{fig4_case1}
    \end{center}
\end{figure}
\begin{figure}[h]
    \begin{center}
    \includegraphics[width=0.8\textwidth]{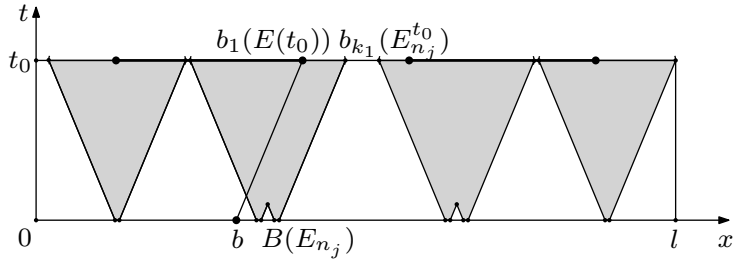}
    \caption{The second case}
    \label{fig5_case2}
    \end{center}
\end{figure}
The set $E_n^{t_0}$ contains the finite number of component intervals, there exists a sequence $\{E_{n_j}^{t_0}\}_{j\in\mathbb N}$ of sets which all contain the same number of component intervals, and the endpoints of these intervals have limits. These limits can be either endpoints of component intervals of the set $E(t_0)$ or inner points of this set.

The point $b_1(E(t_0))$ is the limit of some sequence $\{b_{k_1}(E_{n_j}^{t_0})\}_{j\in\mathbb N}$ of the right endpoints of the component intervals with some fixed number $k_1$ of the sets $E_{n_j}^{t_0}$. Denote
$$
    b:=b_1(E(t_0))-t_0,\ B(E_{n_j}):=b_{k_1}(E_{n_j}^{t_0})-t_0
$$
(see Fig. \ref{fig4_case1} and \ref{fig5_case2}). Then $B(E_{n_j})\to b$ as $j\to\infty$. The sets $E_{n_j}^t$ contain $\{B(E_{n_j})\}^t$ for every $t>0$. Since $E_{n_j}^t\to E(t)$, $m(E_{n_j}^t\triangle E(t))\to0$ and $m((\{B(E_{n_j})\}^t)\triangle(\{b\}^t))\to0$ as $j\to\infty$, by Lemma \ref{Lemma inclusion} we obtain that $\{b\}^t\subseteq E(t)$ up to a set of measure zero for every $t>0$. Therefore $\omega_b\leqslant\omega$.
\end{proof}

Now we can describe the wave spectrum of the operator $L_0$.

\begin{Theorem}\label{Th 1}
$$
\Omega_{L_0}=\left\{\omega_x,\ x\in\left[0,\frac l2\right]\right\}.
$$
\end{Theorem}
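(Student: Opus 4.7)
The plan is to prove $\Omega_{L_0}=\{\omega_x : x\in[0,l/2]\}$ by two inclusions, with Lemma \ref{Lemma main} doing the heavy lifting in both directions. For the inclusion $\supseteq$, I would verify that each $\omega_x$ is an atom of $[I_L\mathfrak L_{L_0}]_{\rm seq}$. Its membership in this set and its nontriviality are already noted in the paragraph preceding Lemma \ref{Lemma main}. Assume $\omega'\in[I_L\mathfrak L_{L_0}]_{\rm seq}$ with $0<\omega'\leqslant\omega_x$; apply Lemma \ref{Lemma main} to obtain $y\in[0,l]$ with $\omega_y\leqslant\omega'\leqslant\omega_x$. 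If I can establish $\omega_y=\omega_x$, then $\omega'$ is squeezed between $\omega_x$ and $\omega_x$ and must equal $\omega_x$, proving atomicity.

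For the inclusion $\subseteq$, any atom $\omega\in\Omega_{L_0}$ satisfies, by Lemma \ref{Lemma main}, $\omega_y\leqslant\omega$ for some $y\in[0,l]$. Since $(\{y\}\cup\{l-y\})^t$ has positive Lebesgue measure for every $t>0$, the function $\omega_y$ is nonzero in $[I_L\mathfrak L_{L_0}]_{\rm seq}$, so atomicity of $\omega$ forces $\omega=\omega_y$. The identity $\omega_y=\omega_{l-y}$ then allows normalization $y\in[0,l/2]$.

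The delicate step, and the only real obstacle, is justifying $\omega_y=\omega_x$ in the atomicity argument. From $\omega_y(t)\subseteq\omega_x(t)$ for every $t>0$ together with Lemma \ref{Lemma convergence in L(H)} one gets $(\{y\}\cup\{l-y\})^t\subseteq(\{x\}\cup\{l-x\})^t$ up to a set of measure zero; but both sides are open subsets of $[0,l]$, so the inclusion is genuine. For $t$ small enough that the right-hand side splits into (at most) two disjoint open component intervals of length $2t$, each connected component on the left must embed into a single component on the right, and an open interval of length $2t$ fits inside another open interval of the same length only when the centers coincide. This forces $y\in\{x,l-x\}$, whence $\omega_y=\omega_x$. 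The remainder is a routine application of the definition of atom, together with the observation that the parametrization $x\in[0,l/2]$ is a bijective labeling of the atoms via the symmetry $\omega_y=\omega_{l-y}$.
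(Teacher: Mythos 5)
Your proposal is correct and follows essentially the same route as the paper: both inclusions rest on Lemma \ref{Lemma main}, and atomicity of $\omega_x$ is pinned down by the observation that $\omega_y\leqslant\omega_x$ forces, for small $t$, the $t$-neighborhoods of $\{y,l-y\}$ and $\{x,l-x\}$ to overlap in the required way, hence $y\in\{x,l-x\}$ and $\omega_y=\omega_x$. The paper phrases this last step as a contradiction (for $\tilde x\neq x$ and small $t$ the subspaces $\omega_x(t)$ and $\omega_{\tilde x}(t)$ intersect trivially), which is the same geometric fact you use in your squeeze argument, so the two proofs coincide in substance.
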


\begin{proof}
Let $\omega\in\Omega_{L_0}$. Since $\Omega_{L_0}\subseteq[I_L\mathfrak L_{L_0}]_{\rm seq}$, by Lemma \ref{Lemma main} there exists $x\in[0,\frac l2]$ such that $\omega_x\leqslant\omega$. Since $\omega$ is an atom of the set $[I_L\mathfrak L_{L_0}]_{\rm seq}$, it should be that $\omega_x=\omega$. Hence $\Omega_{L_0}\subseteq\{\omega_x,\ x\in[0,\frac l2]\}$.

Let us prove the inverse inclusion. Let $x\in[0,\frac l2]$ and let there exist an element $\omega\in[I_L\mathfrak L_{L_0}]_{\rm seq}$ such that $\omega<\omega_x$. By Lemma \ref{Lemma main} there exists $\tilde x\in[0,\frac l2]$ such that $\omega_{\tilde x}\leqslant\omega$. This means that $\omega_{\tilde x}<\omega_x$. But this cannot happen: for $\tilde x=x$ we have $\omega_{\tilde x}=\omega_x$, while for $\tilde x\neq x$ and $t<|\tilde x-x|$ we have $\omega_x(t)\cap\omega_{\tilde x}(t)=\{0\}$ (see Fig. \ref{fig6_x_and_x-tilde}), which contradicts the inequality $\omega_{\tilde x}<\omega_x$. \begin{figure}[h]
    \begin{center}
    \includegraphics[width=0.7\textwidth]{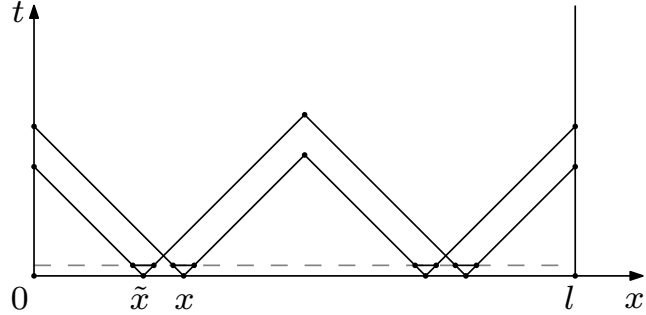}
    \caption{$\omega_{\tilde x}$ and $\omega_x$}
    \label{fig6_x_and_x-tilde}
    \end{center}
\end{figure}
Therefore such $\omega$ does not exist and $\omega_x$ is an atom. Hence $\{\omega_x,\ x\in[0,\frac l2]\}\subseteq\Omega_{L_0}$ and the theorem is proved.
\end{proof}

Denote by $\beta$ the bijection between $[0,\frac l2]$ and $\Omega_{L_0}$ established by Theorem \ref{Th 1}, $\beta: x\mapsto\omega_x$. Let us denote also $x_{\omega}:=\beta^{-1}(\omega)$, $\omega\in\Omega_{L_0}$, $E_x(t):=(\{x\}\cup\{l-x\})^t$ and $f_{\omega}(x):={\rm dist}\,(x,(\{x_{\omega}\}\cup\{l-x_{\omega}\})^t)$. Note that
\begin{equation}\label{E-x-omega}
    E_{x_{\omega}}(t)=\{y\in(0,l):\ f_{\omega}(y)<t\}
\end{equation}
(see Fig. \ref{fig7_f-omega}).
\begin{figure}[h]
    \begin{center}
    \includegraphics[width=0.7\textwidth]{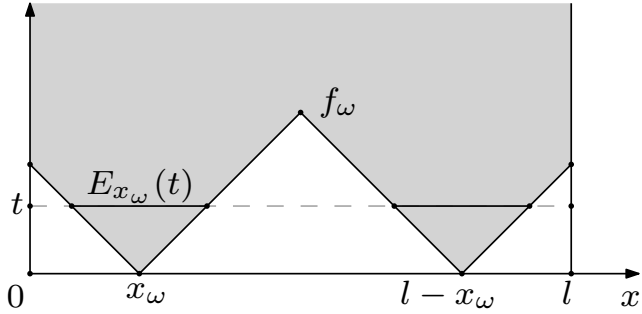}
    \caption{The set $E_{x_{\omega}}(t)$ and the graph of the function $f_{\omega}$}
    \label{fig7_f-omega}
    \end{center}
\end{figure}

\begin{Lemma}\label{Lemma eikonal}
    Let $\omega\in\Omega_{L_0}$. Then the family of projections
    $$
        E_{\omega}(t)=
        \left\{
        \begin{array}{ll}
        P_{\omega(t)},\ &t\geqslant0,
        \\
        0,&t<0,
        \end{array}
        \right.
    $$
    is a resolution of the identity in the space $\mathcal H=L_2(0,l)$, and the corresponding eikonal
    $$
        \tau_{\omega}=\int_{\mathbb R}tdE_{\omega}(t)
    $$
    is the operator of multiplication by the function $f_{\omega}$ in $L_2(0,l)$.
\end{Lemma}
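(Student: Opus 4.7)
The plan is to invoke Theorem \ref{Th 1} and identify $\omega$ with $\omega_{x_\omega}$ for some unique $x_\omega \in [0, l/2]$. Then by construction $\omega(t) = L_2(E_{x_\omega}(t))$, and under the identification in the Remark the orthogonal projection $P_{\omega(t)}$ is simply the operator of multiplication by the characteristic function $\chi_{E_{x_\omega}(t)}$ in $L_2(0,l)$. Formula \eqref{E-x-omega} then lets us rewrite this as $E_\omega(t) = \chi_{\{f_\omega < t\}}\,\cdot\,$ for every $t > 0$.

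I would next verify the four defining properties of a spectral resolution. Monotonicity is inherited from the monotonicity of $\{f_\omega < t\}$ in $t$. At the left endpoint, $E_\omega(t) = 0$ for $t \leq 0$ by the convention in the statement (consistent with $E_{x_\omega}(0) = \{x_\omega,\, l - x_\omega\}$ being of measure zero). At the right endpoint, $\max_{y \in [0,l]} f_\omega(y)$ is finite (in fact at most $l/2$), so $\{f_\omega < t\} = (0, l)$ for all sufficiently large $t$, and hence $E_\omega(t) \overset{s}\to \mathbb I$ as $t \to +\infty$. For right continuity at a fixed $t > 0$, observe that $\bigcap_{\varepsilon > 0}\{f_\omega < t + \varepsilon\} = \{f_\omega \leq t\}$ differs from $\{f_\omega < t\}$ only by the level set $\{f_\omega = t\}$; since $f_\omega$ is piecewise affine with slope $\pm 1$, this level set is finite and therefore of Lebesgue measure zero, so $E_\omega(t + \varepsilon) \overset{s}\to E_\omega(t)$ as $\varepsilon \to 0+$.

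Once the resolution-of-identity property is in place, the identification of the eikonal is pure spectral theory: $\{E_\omega(t)\}_{t \in \mathbb R}$ is the standard spectral family of the bounded self-adjoint multiplication operator $M_{f_\omega}$ by $f_\omega$ on $L_2(0,l)$, so by the functional calculus $\tau_\omega = \int_{\mathbb R} t\, dE_\omega(t) = M_{f_\omega}$. For a self-contained verification one can compute $\|E_\omega(t)u\|^2 = \int_{\{f_\omega < t\}} |u(y)|^2\, dy$ for $u \in L_2(0,l)$ and apply Fubini to the finite measure $|u(y)|^2\, dy$ on $(0,l)$, obtaining $\int_0^\infty t\, d\|E_\omega(t)u\|^2 = \int_0^l f_\omega(y)|u(y)|^2\, dy$, which is precisely the quadratic form of $M_{f_\omega}$; polarization then gives the operator identity.

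The only mildly technical point is the right-continuity step, and this is completely controlled by the elementary geometric observation that the distance function $f_\omega$ to the two-point set $\{x_\omega,\, l - x_\omega\}$ inside $[0,l]$ is piecewise linear and thus has finite (hence null) level sets; all other steps reduce to routine properties of the multiplication calculus in $L_2(0,l)$.
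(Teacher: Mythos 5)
Your proposal is correct and follows essentially the same route as the paper: identify $P_{\omega(t)}$ with multiplication by $\chi_{E_{x_\omega}(t)}=\chi_{\{f_\omega<t\}}$ via \eqref{E-x-omega}, check the resolution-of-identity properties (the limits at $0$ and $+\infty$ and continuity in $t$, where the paper invokes strong left-continuity while you verify right-continuity using the nullity of the level sets of the piecewise affine $f_\omega$ --- both hold here), and then recognize this family as the spectral family of the multiplication operator $M_{f_\omega}$, so that $\tau_\omega=M_{f_\omega}$. Your extra Fubini/polarization computation is just a self-contained version of the paper's appeal to the standard spectral description of multiplication operators.
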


\begin{proof}
As one can see from the definition of elements $\omega_x$, for $t>\frac l2$ one has $\omega_x(t)=\mathcal H$, and so $E(t)\overset{s}\to I$ as $t\to+\infty$. Strong left-continuity of functions $P_{\omega_x(t)}=[\chi_{E_x(t)}]$ also takes place. Therefore the family $E(t)$ is indeed a resolution of the identity and defines the (Stieltjes) integral
$\int_{\mathbb R}tdE_{\omega}(t)$. If $M_f$ is the operator of multiplication by the function $f$, $M_f=[f]$, in the space $L_2(\mathbb R,\rho)$ with the measure $\rho$, then the corresponding resolution of the identity is $E(\lambda)=[\chi_{f^{-1}(-\infty,\lambda)}]$. In our case $\rho$ is the Lebesgue measure on the segment $[0,l]$, and for the operator $M_{f_{\omega_x}}=[f_{\omega_x}]$ we get $E(\lambda)=[\chi_{f^{-1}_{\omega_x}(-\infty,\lambda)}]=[\chi_{E_x(t)}]$ by \eqref{E-x-omega}. This means that $E(\lambda)=P_{\omega_x}(\lambda)$ for $\lambda\geqslant0$ and $E(\lambda)=0$ for $\lambda<0$. Since spectral measures are the same, the operators also are, thus $\tau_{\omega}=M_{f_{\omega}}$.
\end{proof}

As one can see, for every $\omega_1,\omega_2\in\Omega_{L_0}$ the distance
$$
    \tau(\omega_1,\omega_2)=\|\tau_{\omega_1}-\tau_{\omega_2}\|=\|[f_{\omega_1}-f_{\omega_2}]\|=|x_{\omega_1}-x_{\omega_2}|
$$
is correctly defined and the wave spectrum becomes a complete metric space. Thus the map $\beta$ is an isometric isomorphism between the segment $[0,\frac l2]$ and the wave spectrum $\Omega_{L_0}$. The ``balls''
$$
    B_r(\omega)=\{\tilde\omega\in\Omega_{L_0}:\ \exists t>0:\ \tilde\omega(t)\neq0,\tilde\omega(t)\subseteq\omega(r)\}
$$
obviously coincide with
$$
    \left\{\omega_{\tilde x},\ \tilde x\in\left[0,\frac l2\right] :\ |\tilde x-x_{\omega}|<r\right\}=\{\tilde\omega\in\Omega_{L_0}:\ \tau(\tilde\omega,\omega)<r\}
$$
(see Fig. \ref{fig8_ball}), i.e., with the balls for the metric $\tau$, so that the ``ball'' topology on the wave spectrum coincides with the topology defined by this metric.
\begin{figure}[h]
    \begin{center}
    \includegraphics[width=0.7\textwidth]{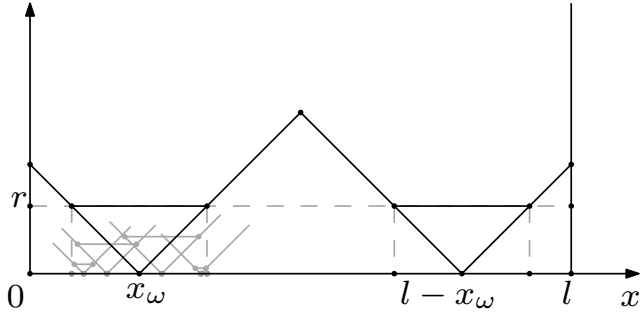}
    \caption{$B_r(\omega)$}
    \label{fig8_ball}
    \end{center}
\end{figure}
From the form of reachable spaces \eqref{UT SL} and the definition of the boundary of the wave spectrum $\partial\Omega_{L_0}$ it follows that in our case
$$
    \partial\Omega_{L_0}=\{\omega_0\}.
$$
The atom $\omega_{\frac l2}$ is not a point of the boundary. Furthermore, the distance from the boundary defines the coordinate
$$
    \tau(\omega):=\tau(\omega,\partial\Omega_{L_0})=x_{\omega},
$$
which parametrizes the wave spectrum for the ``outer observer'' (unlike the isomorphism $\beta$ available only to the ``inner observer'').

\subsection{The wave model}
We begin constructing the wave model of the operator $L_0$ starting with the space of values. The first three Conditions from the abstract part are satisfied, which is obvious since we explicitly know the subspaces $\omega_x(t)$. It is also clear that atoms vanish at zero. To prove existence of a gauge element we need the following standard lemma.

\begin{Lemma}\label{Lemma not many zeros}
A function $u\in{\rm Ker}\,L_0^*$ cannot have more than one zero on the segment $[0,l]$.
\end{Lemma}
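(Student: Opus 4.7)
The plan is to argue by contradiction, using the positive-definiteness of the Friedrichs extension $L$ together with a standard cut-off / integration-by-parts trick. Suppose a nonzero $u\in{\rm Ker}\,L_0^*$ has two distinct zeros $a<b$ in $[0,l]$. Since $u\in H^2(0,l)$ solves $-u''+qu=0$, I would build from it a competitor in the form domain of $L$ that witnesses a zero of the quadratic form and hence violates positive-definiteness.

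First I would define
$$
v(x):=\begin{cases} u(x), & x\in[a,b],\\ 0, & x\in[0,l]\setminus[a,b].\end{cases}
$$
Because $u(a)=u(b)=0$, the function $v$ is continuous across the points $a$ and $b$, and its weak derivative is $u'\chi_{[a,b]}$, which is in $L_2(0,l)$. Combined with $v(0)=v(l)=0$, this shows $v\in H^1_0(0,l)={\rm Dom\,}L^{1/2}$, the form domain of the Friedrichs extension.

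Next I would compute the associated quadratic form. One has
$$
\bigl(L^{1/2}v,L^{1/2}v\bigr)=\int_0^l\bigl(|v'|^2+q|v|^2\bigr)\,dx=\int_a^b\bigl(|u'|^2+q|u|^2\bigr)\,dx.
$$
Integration by parts on $[a,b]$, using the vanishing of $u$ at the endpoints $a,b$, yields
$$
\int_a^b|u'|^2\,dx=-\int_a^b u''\,\bar u\,dx,
$$
so the form equals $\int_a^b(-u''+qu)\bar u\,dx$. Since $u\in{\rm Ker}\,L_0^*$ means $-u''+qu=0$ on $(0,l)$, this integral is zero. Therefore $(Lv,v)=\|L^{1/2}v\|^2=0$, but the positive-definiteness $(Lw,w)\geqslant\varkappa\|w\|^2$ forces $v=0$; that is, $u\equiv 0$ on $[a,b]$.

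Finally, since $u$ and $u'$ both vanish at any interior point of $[a,b]$, the standard uniqueness theorem for the linear second order ODE $-u''+qu=0$ gives $u\equiv 0$ on all of $[0,l]$, contradicting the assumption that $u$ was a nontrivial kernel element. The only slightly delicate step is verifying that $v\in H^1_0(0,l)$ (so that the form inequality applies), but this is immediate once one notes that the cut-off is performed exactly at the zeros of $u$; the rest is a direct computation. Thus $u$ can have at most one zero in $[0,l]$.
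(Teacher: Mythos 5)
Your proof is correct. It rests on the same basic mechanism as the paper's — cutting $u$ off exactly at its zeros produces an admissible element of $\mathring H^1(0,l)$, the form domain of the Friedrichs extension $L$, without increasing the quadratic form — but the execution differs. The paper introduces the auxiliary Dirichlet operator $L_{ab}$ on $(a,b)$, notes that $u$ restricted to $(a,b)$ lies in its kernel, and invokes the minimax principle together with extension by zero to conclude $\lambda_1(L_{ab})\geqslant\lambda_1(L)>0$, so $0$ cannot be an eigenvalue of $L_{ab}$; the endpoint case $a=0$, $b=l$ is treated separately via $\mathrm{Ker}\,L=\{0\}$. You instead test the form of $L$ directly on the zero-extension $v$ of $u|_{[a,b]}$, use integration by parts and the equation $-u''+qu=0$ to see that the form vanishes on $v$, and then finish with uniqueness for the ODE; this avoids the minimax principle and the auxiliary operator, handles all placements of the two zeros (including both at the endpoints) uniformly, but needs the extra ODE-uniqueness step at the end, which the paper's eigenvalue-comparison route does not make explicit. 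One cosmetic point: since $v$ need only lie in the form domain and not in $\mathrm{Dom}\,L$, the coercivity should be quoted in the form $\|L^{1/2}v\|^2\geqslant\varkappa\|v\|^2$ on $\mathrm{Dom}\,L^{1/2}$ (which follows from $(Lw,w)\geqslant\varkappa\|w\|^2$ by the spectral theorem) rather than as $(Lv,v)\geqslant\varkappa\|v\|^2$; you essentially acknowledge this, and it does not affect the argument.
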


\begin{proof}
Let $u\in{\rm Ker}\,L_0^*$. The operator $L$ is positive-definite, and hence its kernel is trivial. Therefore $u$ cannot vanish at both points $0$ and $l$ simultaneously. Assume that $u$ has two zeros, $a$ and $b$, on the segment $[0,l]$, and at least one of them is an inner point. Then $u$ is in the kernel of the Strum--Liouville operator $L_{ab}$ defined on the interval $(a,b)$ by the differential expression $-\frac{d^2}{dx^2}+q(x)$ with the Dirichlet boundary conditions at the points $a$ and $b$. Such an operator is self-adjoint and semibounded from below. Let $l$ and $l_{ab}$ denote the sesquilinear forms that correspond to the operators $L$ and $L_{ab}$. Their domains are $d[l]=\mathring H^1(0,l)$ and $d[l_{ab}]=\mathring H^1(a,b)$. According to the minimax principle \cite{Birman-Solomyak-1980},
$$
    \lambda_1(L_{ab})=\min_{u\in\mathring H^1(a,b)}\frac{(l_{ab}u,u)_{L_2(a,b)}}{\|u\|^2_{L_2(a,b)}}.
$$
If a function $u\in\mathring H^1(a,b)$ is continued by zero to the whole segment $[0,l]$, then one gets the function $\tilde u\in\mathring H^1(0,l)$ and $\|\tilde u\|_{L_2(0,l)}=\|u\|_{L_2(a,b)}$. Besides that,
\begin{multline*}
    (l_{ab}u,u)_{L_2(a,b)}=\|u'\|^2_{L_2(a,b)}+(qu,u)_{L_2(a,b)}
    \\=\|\tilde u'\|^2_{L_2(0,l)}+(q\tilde u,\tilde u)_{L_2(0,l)}=(l\tilde u,\tilde u)_{L_2(0,l)}.
\end{multline*}
Therefore
\begin{multline*}
    \min_{u\in\mathring H^1(a,b)}\frac{(l_{ab}u,u)_{L_2(a,b)}}{\|u\|^2_{L_2(a,b)}}=\min_{u\in\mathring H^1(a,b)}\frac{(l\tilde u,\tilde u)_{L_2(0,l)}}{\|\tilde u\|^2_{L_2(0,l)}}
    \\
    \geqslant
    \min_{u\in\mathring H^1(0,l)}\frac{(lu,u)_{L_2(0,l)}}{\|u\|^2_{L_2(0,l)}}=\lambda_1(L).
\end{multline*}
We obtained that $\lambda_1(L_{ab})\geqslant\lambda_1(L)>0$, which means that $0$ cannot be an eigenvalue of the operator $L_{ab}$, $u\notin{\rm Ker}\,L_{ab}$, a contradiction. Hence the function $u$ cannot have two zeros on the segment $[0,l]$, and the lemma is proved.
\end{proof}

Let us take an element $e\in{\rm Ker}\,L_0^*$ as a gauge element. The kernel of the operator $L_0^*$ consists of solutions of the equation $-u''+qu=0$, we take a solution which does not vanish at the point $\frac l2$ as $e$. The lemma just proved guarantees that $|e(x_{\omega})|^2+|e(l-x_{\omega})|^2\neq0$ for every $\omega\in\Omega_{L_0}$. The wave spectrum can be taken as the set $\Omega_{L_0}^e$. Indeed, let $u\in\mathcal U_{L_0}$ and $\omega\in\Omega_{L_0}$. Then
$$
    \frac{\|P_{\omega(t)}u\|^2}{\|P_{\omega(t)}e\|^2}=\frac{\int_{E_{x_{\omega}}(t)}|u(x)|^2dx}{\int_{E_{x_{\omega}}(t)}|e(x)|^2dx}\underset{t\to+0}
    \longrightarrow\frac{|u(x_{\omega})|^2+|u(l-x_{\omega})|^2}{|e(x_{\omega})|^2+|e(l-x_{\omega})|^2}.
$$
Thus Condition \ref{condition 5 gauge element} is satisfied. This allows to define on smooth waves the sesquilinear form
$$
    \langle u,v\rangle_{\omega}:=\frac{u(x_{\omega})\overline{v(x_{\omega})}+u(l-x_{\omega})\overline{v(l-x_{\omega})}}{|e(x_{\omega})|^2+|e(l-x_{\omega})|^2},\ u,v\in\mathcal U_{L_0}.
$$
Factorizing with respect to the equivalence relation
$$
    u\underset{\omega}\sim v\Leftrightarrow\langle u-v,u-v\rangle_{\omega}=0\Leftrightarrow
    \left\{
    \begin{array}{l}
    u(x_{\omega})=v(x_{\omega}),
    \\
    u(l-x_{\omega})=v(l-x_{\omega}),
    \end{array}
    \right.
$$
avoiding stalks, we come directly to spaces of values $\mathcal U_{L_0,\omega}^{\rm w}=\{[u](\omega),\ u\in\mathcal U_{L_0}\}$ of dimension two with the inner product
$$
    \langle[u](\omega),[v](\omega)\rangle_{\mathcal U_{L_0,\omega}^{\rm w}}=\frac{u(x_{\omega})\overline{v(x_{\omega})}+u(l-x_{\omega})\overline{v(l-x_{\omega})}}{|e(x_{\omega})|^2+|e(l-x_{\omega})|^2}.
$$
This definition does not depend on the choice of the equivalence class representatives $u$ and $v$. Denoting
$$
    \rho(x):=(|e(x)|^2+|e(l-x)|^2),
$$
we can write
\begin{multline*}
    (u,v)_{\mathcal H}=\int_0^lu(x)\overline{v(x)}dx=\int_0^{\frac l2}(u(x)\overline{v(x)}+u(l-x)\overline{v(l-x)})dx
    \\
    =\int_0^{\frac l2}\langle[u](\omega_x),[v](\omega_x)\rangle_{\mathcal U_{L_0,\omega_x}^{\rm w}}\rho(x)dx
    =\int_{\Omega_{L_0}}\langle[u](\omega),[v](\omega)\rangle_{\mathcal U_{L_0,\omega}^{\rm w}}d\mu(\omega),
\end{multline*}
where $\mu$ is the image of the measure $\rho(x)dx$ on the segment $[0,\frac l2]$ under the map $\beta$. Thus Condition \ref{condition 6 measure on spectrum} is satisfied. We obtain the space of the wave representation
$$
    \mathcal H_{L_0}^{\rm w}=\oplus\int_{\Omega_{L_0}}\mathcal U_{L_0,\omega}^{\rm w}d\mu(\omega).
$$

The operator $W^{\rm w}$ is the closure of the operator $W^{\rm w}_0:u\mapsto[u](\omega)$ defined on ${\rm Dom}\,W^{\rm w}_0=\mathcal U_{L_0}$. It is obviously isometric, but Condition \ref{condition 7 unitarity of transition operator} demands that it is unitary.

\begin{Lemma}\label{unitarity of W-w}
    The operator $W^{\rm w}$ is unitary.
\end{Lemma}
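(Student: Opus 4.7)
The operator $W^{\rm w}=\overline{W_0^{\rm w}}$ is isometric by construction, hence closed with closed range. To prove unitarity it suffices to show that ${\rm Ran}\,W_0^{\rm w}$ is dense in $\mathcal H^{\rm w}_{L_0}$.

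The plan is to exploit the explicit description of the wave spectrum, the measure $\mu$, and the stalks to realize $\mathcal H^{\rm w}_{L_0}$ as an ordinary $L_2$-space. By Theorem \ref{Th 1} the isometry $\beta:[0,l/2]\to\Omega_{L_0}$ identifies the base measure $\mu$ with $\rho(x)\,dx$. For $\mu$-a.e. $\omega=\omega_x$ the stalk $\mathcal U^{\rm w}_{L_0,\omega_x}$ is two-dimensional, and the map $[u](\omega_x)\mapsto(u(x_\omega),u(l-x_\omega))$ identifies it with $\mathbb C^2$ equipped with the Hermitian form $\rho(x)^{-1}(z_1\overline{w_1}+z_2\overline{w_2})$. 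Combining this with $d\mu(\omega_x)=\rho(x)\,dx$ the weights cancel, and the direct-integral norm becomes the standard $L_2$-norm
\[
\int_0^{l/2}\bigl(|a(x)|^2+|b(x)|^2\bigr)\,dx,
\]
for sections $\omega_x\mapsto(a(x),b(x))$. The gluing map $\Phi:\mathcal H^{\rm w}_{L_0}\to L_2(0,l)$, defined by $\Phi(\xi)(x)=a(x)$ on $(0,l/2)$ and $\Phi(\xi)(x)=b(l-x)$ on $(l/2,l)$, is therefore an isometric isomorphism.

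Under this identification, $\Phi\circ W_0^{\rm w}$ acts on a smooth wave $u\in\mathcal U_{L_0}$ by sending it to the function on $(0,l)$ whose values on $(0,l/2)$ and $(l/2,l)$ are $u(x)$ and $u(l-(l-x))=u(x)$ respectively. Thus $\Phi\circ W_0^{\rm w}$ is simply the canonical inclusion $\mathcal U_{L_0}\hookrightarrow L_2(0,l)$. By Lemma \ref{Lemma controll SL} (combined with the monotonicity of reachable sets) $\mathcal U_{L_0}=C^\infty[0,l]$, which is dense in $L_2(0,l)$; hence ${\rm Ran}\,(\Phi\circ W_0^{\rm w})$ is dense in $L_2(0,l)$, so ${\rm Ran}\,W_0^{\rm w}$ is dense in $\mathcal H^{\rm w}_{L_0}$. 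Combined with isometry and closedness this gives $W^{\rm w}$ unitary.

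The main technical point, and the only place where one must be careful, is the direct-integral step: one has to check that the fibrewise identification $[u](\omega_x)\leftrightarrow(u(x),u(l-x))$ assembles into a \emph{measurable} isomorphism of $\mathcal H^{\rm w}_{L_0}$ with the classical $L_2$-space so that $\Phi$ is well-defined on all of $\mathcal H^{\rm w}_{L_0}$ and not merely on the dense subset of sections coming from smooth waves. This is verified using that $\beta$ is a homeomorphism of metric spaces, that $x\mapsto\rho(x)$ is smooth and bounded away from zero (by Lemma \ref{Lemma not many zeros} and continuity of $e$), and that the coordinate sections $x\mapsto([e_1](\omega_x),[e_2](\omega_x))$ from Condition \ref{condition 8 basis} provide a measurable frame.
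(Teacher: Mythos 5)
Your proof is correct and takes essentially the same route as the paper: the paper also identifies an arbitrary $y^{\rm w}\in\mathcal H^{\rm w}$ with a scalar function $y$ on $(0,l)$ built from the fibre values $y(x_\omega),y(l-x_\omega)$ (your gluing map $\Phi$) and exploits the same cancellation of $\rho$ against $d\mu$. The only difference is cosmetic: the paper concludes by computing ${W_0^{\rm w}}^*y^{\rm w}=y$ and showing ${\rm Ker}\,{W_0^{\rm w}}^*=\{0\}$, whereas you deduce dense range directly from $\mathcal U_{L_0}=C^\infty[0,l]$ being dense in $L_2(0,l)$.
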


\begin{proof}
Let $y^{\rm w}\in\mathcal H^{\rm w}$ and $u\in\mathcal U_{L_0}$. For every $\omega\in\Omega_{L_0}$ the value $y^{\rm w}(\omega)$ is in $\mathcal U_{L_0,\omega}^{\rm w}$, the equivalence class of functions from $\mathcal U_{L_0}$, which have certain values at the points $x_{\omega}$ and $l-x_{\omega}$. Denote these values by $y(x_{\omega})$ and $y(l-x_{\omega})$. Then the function $y:[0,l]\to\mathbb C$ corresponds to the element $y^{\rm w}$, and
\begin{multline*}
    \langle[u](\omega),y^{\rm w}(\omega)\rangle_{\mathcal U_{L_0,\omega}^{\rm w}}
    =\langle[u](\omega),[v_y](\omega)\rangle_{\mathcal U_{L_0,\omega}^{\rm w}}
    \\
    =\frac{u(x_{\omega})\overline{v_y(x_{\omega})}+u(l-x_{\omega})\overline{v_y(l-x_{\omega})}}{\rho(x)}
    =\frac{u(x_{\omega})\overline{y(x_{\omega})}+u(l-x_{\omega})\overline{y(l-x_{\omega})}}{\rho(x)},
\end{multline*}
where $v_y$ is an element from $\mathcal U_{L_0}$ such that $[v_y](\omega)=y^{\rm w}(\omega)$. Therefore
\begin{multline*}
    \int_{\Omega_{L_0}}\langle[u](\omega),y^{\rm w}(\omega)\rangle_{\mathcal U_{L_0,\omega}^{\rm w}}d\mu(\omega)
    \\
    =\int_0^{\frac l2}\frac{u(x_{\omega})\overline{y(x_{\omega})}+u(l-x_{\omega})\overline{y(l-x_{\omega})}}{\rho(x)}\rho(x)dx
    =\int_0^lu(x_{\omega})\overline{y(x_{\omega})}dx
\end{multline*}
for every $u\in\mathcal U_{L_0}$, which implies that the integral on the right-hand side exists. This means that $y\in L_2(0,l)=\mathcal H$ and $y={W_0^{\rm w}}^*y^{\rm w}$. If $y=0$, then also $y^{\rm w}=0$, hence ${\rm Ker}\,{W_0^{\rm w}}^*=\{0\}$ and $\overline{{\rm Ran}\,W_0^{\rm w}}={\rm Ran}\,W^{\rm w}=\mathcal H$. Together with isometricity this means that $W^{\rm w}$ is a unitary operator, and the lemma is proved.
\end{proof}

\subsection{The coordinate representation}
Let $e_1$, $e_2$ be a basis in ${\rm Ker}\,L_0^*$. Solutions of the equation $-u''+qu=0$ are smooth functions, thus $e_1,e_2\in\mathcal U_{L_0}$.

\begin{Lemma}\label{Lemma base in value space}
    For every $\omega\in\Omega_{L_0}\setminus\{\omega_{\frac l2}\}$ the vectors $[e_1](\omega)$ and $[e_2](\omega)$ form a basis in $\mathcal U_{L_0,\omega}^{\rm w}$.
\end{Lemma}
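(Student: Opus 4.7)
The plan is to reduce the claim to a concrete statement about vectors in $\mathbb{C}^2$, then rule out linear dependence using Lemma \ref{Lemma not many zeros}.

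First I would use the explicit description of the space of values established earlier in this subsection: by the factorization $u\underset{\omega}\sim v\Leftrightarrow u(x_\omega)=v(x_\omega)\ \&\ u(l-x_\omega)=v(l-x_\omega)$, the map
$$
    \Phi_\omega:\mathcal U^{\rm w}_{L_0,\omega}\to\mathbb C^2,\quad [u](\omega)\mapsto(u(x_\omega),u(l-x_\omega))
$$
is well-defined and injective. For $\omega\neq\omega_{l/2}$ we have $x_\omega\neq l-x_\omega$, and since $\mathcal U_{L_0}\supseteq C^\infty[0,l]$ (by Lemma \ref{Lemma controll SL}) one can realize any pair of values at two distinct points; hence $\Phi_\omega$ is surjective and $\dim\mathcal U^{\rm w}_{L_0,\omega}=2$. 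Under $\Phi_\omega$ the germs $[e_j](\omega)$ correspond to the columns $(e_j(x_\omega),e_j(l-x_\omega))$, so the lemma reduces to non-vanishing of the determinant
$$
    \Delta(\omega):=e_1(x_\omega)e_2(l-x_\omega)-e_2(x_\omega)e_1(l-x_\omega).
$$

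Next I would argue by contradiction. If $\Delta(\omega)=0$ then the two vectors $(e_j(x_\omega),e_j(l-x_\omega))$, $j=1,2$, are linearly dependent, so there exist $\alpha,\beta\in\mathbb C$, not both zero, with
$$
    (\alpha e_1+\beta e_2)(x_\omega)=0,\qquad (\alpha e_1+\beta e_2)(l-x_\omega)=0.
$$
The function $u:=\alpha e_1+\beta e_2$ is a nonzero element of $\mathrm{Ker}\,L_0^*$ (since $e_1,e_2$ form a basis there), and it possesses two distinct zeros $x_\omega,\ l-x_\omega\in[0,l]$ because $x_\omega<l/2$. This directly contradicts Lemma \ref{Lemma not many zeros}. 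Hence $\Delta(\omega)\neq0$, and $[e_1](\omega),[e_2](\omega)$ are linearly independent, thus form a basis in the two-dimensional space $\mathcal U^{\rm w}_{L_0,\omega}$.

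The only real subtlety is the identification of $\mathcal U^{\rm w}_{L_0,\omega}$ with $\mathbb C^2$; everything else is a one-line appeal to Lemma \ref{Lemma not many zeros}. The exclusion of $\omega_{l/2}$ is essential precisely because in that case $x_\omega=l-x_\omega=l/2$ collapses the two evaluations into one, forcing $\dim\mathcal U^{\rm w}_{L_0,\omega_{l/2}}=1$ and making the two germs automatically dependent.
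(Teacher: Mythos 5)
Your proof is correct and follows essentially the same route as the paper: linear dependence of $[e_1](\omega),[e_2](\omega)$ is translated into proportionality of the evaluation vectors $(e_j(x_\omega),e_j(l-x_\omega))$ in $\mathbb C^2$, producing a nonzero element of ${\rm Ker}\,L_0^*$ with two distinct zeros, which contradicts Lemma~\ref{Lemma not many zeros}. Your extra verification that $\dim\mathcal U^{\rm w}_{L_0,\omega}=2$ for $\omega\neq\omega_{l/2}$ is a point the paper only asserts implicitly, but it is not a departure from its argument.
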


\begin{proof}
Linear dependence of $[e_1](\omega)$ and $[e_2](\omega)$ would mean proportionality of the vectors
$
\begin{pmatrix}
   e_1(x_{\omega}) \\
   e_1(l-x_{\omega}) \\
\end{pmatrix}
$
and
$
\begin{pmatrix}
   e_2(x_{\omega}) \\
   e_2(l-x_{\omega}) \\
\end{pmatrix}
$
in $\mathbb C^2$, which would mean existence of a solution of $-u''+qu=0$ with zeros at the points $x_{\omega}$ and $l-x_{\omega}$. By Lemma \ref{Lemma not many zeros} this is impossible.
\end{proof}

It follows that Condition \ref{condition 8 basis} is satisfied with $\Omega_{L_0}^0=\Omega_{L_0}\setminus\{\omega_{\frac l2}\}$. Using the elements $e_1$, $e_2$, we define the coefficients
$$
    \hat u(x_{\omega})=
    \begin{pmatrix}
      \langle u,e_1\rangle_{\omega} \\
      \langle u,e_2\rangle_{\omega} \\
    \end{pmatrix}.
$$
in the spaces $\mathcal U_{L_0,\omega}^{\rm w}$. These coefficients are not the coordinates of the element $[u](\omega)$ in the decomposition over the basis $[e_1](\omega)$, $[e_2](\omega)$, such coordinates are given by the components of the vector $G^{-1}(\omega)\hat u(x_{\omega})$, where
$$
    G(\omega)=
    \begin{pmatrix}
      \langle e_1,e_1\rangle_{\omega} & \langle e_2,e_1\rangle_{\omega} \\
      \langle e_1,e_2\rangle_{\omega} & \langle e_2,e_2\rangle_{\omega} \\
    \end{pmatrix}
$$
is the Gram matrix. The coefficients $\hat u(x)\in\mathbb C^2$ are available to the ``outer observer'', the linear map $[u](\omega)\mapsto\hat u(x_{\omega})$ is bijective from $\mathcal U_{L_0,\omega}^{\rm w}$ to $\mathbb C^2$. In the space $\mathbb C^2$ we have to define an inner product corresponding to $\langle[u](\omega),[v](\omega)\rangle_{\mathcal U_{L_0,\omega}^{\rm w}}$ in $\mathcal U_{L_0,\omega}^{\rm w}$.

\begin{Lemma}\label{Lemma inner product}
    For every $u,v\in\mathcal U_{L_0,\omega}^{\rm w}$ and $\omega\in\Omega_{L_0}$
    $$
        \langle[u](\omega),[v](\omega)\rangle_{\mathcal U_{L_0,\omega}^{\rm w}}=(G^{-1}(\omega)\hat u(x_{\omega}),\hat v(x_{\omega}))_{\mathbb C^2}.
    $$
\end{Lemma}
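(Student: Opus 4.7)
The plan is to reduce the identity to a standard linear-algebraic fact about inner products in a finite-dimensional space equipped with a chosen basis, applied to the two-dimensional space $\mathcal U_{L_0,\omega}^{\rm w}$ with basis $\{[e_1](\omega),[e_2](\omega)\}$ (available by Lemma~\ref{Lemma base in value space} for $\omega\neq\omega_{l/2}$; the excluded single atom has $\mu$-measure zero by Condition~\ref{condition 8 basis}, so the identity extends trivially there).

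First I would expand both elements in the basis: write $[u](\omega)=\alpha_1[e_1](\omega)+\alpha_2[e_2](\omega)$ and $[v](\omega)=\beta_1[e_1](\omega)+\beta_2[e_2](\omega)$, and set $\alpha=(\alpha_1,\alpha_2)^T$, $\beta=(\beta_1,\beta_2)^T$. Substituting the first expansion into the defining formula for $\hat u(x_\omega)$ gives
$$
\hat u(x_\omega)_k=\langle u,e_k\rangle_\omega=\sum_{i}\alpha_i\langle e_i,e_k\rangle_\omega=\sum_i G(\omega)_{ki}\alpha_i,
$$
i.e., $\hat u(x_\omega)=G(\omega)\alpha$, hence $\alpha=G^{-1}(\omega)\hat u(x_\omega)$; the same gives $\beta=G^{-1}(\omega)\hat v(x_\omega)$. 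Note that invertibility of $G(\omega)$ follows from Lemma~\ref{Lemma base in value space} since the Gram matrix of a basis in a Hilbert space is always invertible.

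Next I would compute the inner product directly from the expansions using sesquilinearity:
\begin{align*}
\langle[u](\omega),[v](\omega)\rangle_{\mathcal U_{L_0,\omega}^{\rm w}}
&=\sum_{i,j}\alpha_i\overline{\beta_j}\,\langle e_i,e_j\rangle_\omega
=\sum_{j}\overline{\beta_j}\sum_i G(\omega)_{ji}\alpha_i
\\
&=(G(\omega)\alpha,\beta)_{\mathbb C^2}.
\end{align*}
Substituting $\alpha=G^{-1}(\omega)\hat u(x_\omega)$ and $\beta=G^{-1}(\omega)\hat v(x_\omega)$ collapses this to
$$
(G(\omega)\alpha,\beta)_{\mathbb C^2}=(\hat u(x_\omega),G^{-1}(\omega)\hat v(x_\omega))_{\mathbb C^2}=(G^{-1}(\omega)\hat u(x_\omega),\hat v(x_\omega))_{\mathbb C^2},
$$
where the last equality uses that $G(\omega)$ is Hermitian (as the Gram matrix of a positive definite sesquilinear form), hence so is $G^{-1}(\omega)$, so it is self-adjoint with respect to the standard inner product on $\mathbb C^2$.

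There is no real obstacle here; the only point requiring a brief justification is the invertibility and Hermiticity of $G(\omega)$, both of which are immediate from Lemma~\ref{Lemma base in value space} and from the fact that $\langle\cdot,\cdot\rangle_\omega$ is a (positive definite) inner product on $\mathcal U_{L_0,\omega}^{\rm w}$.
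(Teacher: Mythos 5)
Your proof is correct, but it takes a genuinely different route from the paper. You argue abstractly: expand $[u](\omega)$, $[v](\omega)$ in the basis $[e_1](\omega),[e_2](\omega)$, observe that the coefficient vector is $G^{-1}(\omega)\hat u(x_\omega)$, and conclude by sesquilinearity together with Hermiticity and invertibility of the Gram matrix --- an argument that works verbatim for any finite defect index $n$ and never uses the concrete description of the space of values. The paper instead computes everything explicitly in point values: it introduces the matrix $T(x)$ of \eqref{T}, shows $\hat u(x_\omega)=T(x_\omega)\,(u(x_\omega),u(l-x_\omega))^{T}$ and $G(\omega_x)=\rho(x)T(x)T^{*}(x)$, hence $T^{*}(x)G^{-1}(\omega_x)T(x)=I/\rho(x)$, and the lemma drops out. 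What the concrete computation buys is precisely these formulas, which are not by-products of your argument but are reused afterwards: the representation \eqref{W-c} of $W^{\rm c}$ in Lemma~\ref{Lemma W-c unitary} and the computation of $\hat P$, $\hat Q$ in the final theorem all rest on $T(x)$ and on $T^{*}G^{-1}T=I/\rho$. What your argument buys is generality and independence from the specific model. One small caution: your phrase that the identity ``extends trivially'' to $\omega_{l/2}$ is not accurate --- at $x_\omega=l/2$ the space of values is one-dimensional, $G(\omega_{l/2})$ is singular and $G^{-1}$ does not exist, so the identity must simply be asserted for $\omega\in\Omega_{L_0}\setminus\{\omega_{l/2}\}$; note, however, that the paper's own proof has the same restriction (non-degeneracy of $G(\omega_x)$ only for $x\in[0,\tfrac l2)$), so this is a shared imprecision of the statement rather than a gap in your argument.
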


\begin{proof}
By Lemma \ref{Lemma base in value space}, the Gram matrix $G(\omega_x)$ is non-degenerate for $x\in[0,\frac l2)$. Computation gives:
$$
    \langle[u](\omega),[v](\omega)\rangle_{\mathcal U_{L_0,\omega}^{\rm w}}
    =\frac{u(x_{\omega})\overline{v(x_{\omega})}+u(l-x_{\omega})\overline{v(l-x_{\omega})}}{\rho(x_{\omega})},
$$
$$
    \hat u(x_{\omega})=\frac1{\rho(x_{\omega})}
    \begin{pmatrix}
          u(x_{\omega})\overline{e_1(x_{\omega})}+u(l-x_{\omega})\overline{e_1(l-x_{\omega})}\\
          u(x_{\omega})\overline{e_2(x_{\omega})}+u(l-x_{\omega})\overline{e_2(l-x_{\omega})}\\
    \end{pmatrix}
    =T(x_{\omega})
    \begin{pmatrix}
         u(x_{\omega}) \\
         u(l-x_{\omega}) \\
    \end{pmatrix},
$$
where
\begin{equation}\label{T}
    T(x):=\frac1{\rho(x)}
    \begin{pmatrix}
          \overline{e_1(x)} & \overline{e_1(l-x)} \\
          \overline{e_2(x)} & \overline{e_2(l-x)} \\
    \end{pmatrix}.
\end{equation}
Furthermore,
\begin{multline*}
    (G^{-1}(\omega_x)\hat u(x),\hat v(x))_{\mathbb C^2}
    \\
    =
    \left(
    G^{-1}(\omega_x)T(x)
    \begin{pmatrix}
          u(x_{\omega}) \\
          u(l-x_{\omega}) \\
    \end{pmatrix}
    ,
    T(x)
    \begin{pmatrix}
          v(x_{\omega}) \\
          v(l-x_{\omega}) \\
    \end{pmatrix}
    \right)_{\mathbb C^2}
    \\
    =
    \left(
    T^*(x)G^{-1}(\omega_x)T(x)
    \begin{pmatrix}
          u(x_{\omega}) \\
          u(l-x_{\omega}) \\
    \end{pmatrix}
    ,
    \begin{pmatrix}
          v(x_{\omega}) \\
          v(l-x_{\omega}) \\
    \end{pmatrix}
    \right)_{\mathbb C^2}.
\end{multline*}
It is easy to see that $G(\omega_x)=\rho(x)T(x)T^*(x)$, so that $T^*(x)G^{-1}(\omega_x)T(x)=\frac{I}{\rho(x)}$ and
\begin{multline*}
    (G^{-1}(\omega)\hat u(x_{\omega}),\hat v(x_{\omega}))_{\mathbb C^2}
    \\
    =\frac1{\rho(x_{\omega})}
    \left(
    \begin{pmatrix}
          u(x_{\omega}) \\
          u(l-x_{\omega}) \\
    \end{pmatrix}
    ,
    \begin{pmatrix}
          v(x_{\omega}) \\
          v(l-x_{\omega}) \\
    \end{pmatrix}
    \right)_{\mathbb C^2}
    =\langle[u](\omega),[v](\omega)\rangle_{\mathcal U_{L_0,\omega}^{\rm w}},
\end{multline*}
and the lemma is proved.
\end{proof}

Consider the space of the coordinate representation
$$
    \mathcal H^{\rm c}:=L_2\left(\left(0,\frac l2\right),G^{-1}(\omega)\rho(x_{\omega})dx_{\omega}, \mathbb C^2\right).
$$
The operator $W_0^{\rm c}:u\mapsto\hat u$, from $\mathcal H$ to $\mathcal H^{\rm c}$, defined on ${\rm Dom}\,W_0^{\rm c}=\mathcal U_{L_0}$, after closure becomes an isometric operator $W^{\rm c}=\overline{W_0^{\rm c}}$ defined on the whole space $\mathcal H$.

\begin{Lemma}\label{Lemma W-c unitary}
The operator $W^{\rm c}$ is unitary and
\begin{equation}\label{W-c}
    (W^{\rm c}u)(x_{\omega})=T(x_{\omega})
    \begin{pmatrix}
          u(x_{\omega}) \\
          u(l-x_{\omega}) \\
    \end{pmatrix}.
\end{equation}
for every $u\in\mathcal H$.
\end{Lemma}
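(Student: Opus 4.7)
The plan is to exhibit an explicit bounded candidate operator, identify it with $W^{\rm c}$ on the dense subspace of smooth waves, and then invert it pointwise to obtain surjectivity. The hinge of everything is the matrix identity
$$
T^*(x)G^{-1}(\omega_x)T(x)=\rho(x)^{-1}I,
$$
which was already derived inside the proof of Lemma \ref{Lemma inner product}, and which I will use in both directions.

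First I would define $V\colon\mathcal H\to\mathcal H^{\rm c}$ by the right-hand side of \eqref{W-c} and check that $V$ is bounded and isometric: multiplying the identity above by $\rho(x)$ and integrating it against $\binom{u(x)}{u(l-x)}$ collapses the norm on $\mathcal H^{\rm c}$ to $\int_0^{l/2}(|u(x)|^2+|u(l-x)|^2)\,dx=\|u\|_{\mathcal H}^2$. Next I would observe that on any smooth wave $u\in\mathcal U_{L_0}$ the explicit computation of $\hat u(x_\omega)$ performed inside Lemma \ref{Lemma inner product} gives precisely $Vu=W_0^{\rm c}u$. Since $\mathcal U_{L_0}$ is dense in $\mathcal H$ by controllability (Section \ref{ssec Controllability}) and $V$ is bounded, this forces $V=\overline{W_0^{\rm c}}=W^{\rm c}$ on all of $\mathcal H$, which already yields formula \eqref{W-c} and reproves isometricity.

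For surjectivity I would invert $V$ pointwise. By Lemma \ref{Lemma base in value space}, the two columns of $T(x)$ are linearly independent for every $x\in(0,l/2)$, so $T(x)$ is invertible there; the underlying reason is Lemma \ref{Lemma not many zeros}, which prevents any nonzero element of ${\rm Ker}\,L_0^*$ from vanishing simultaneously at $x$ and $l-x$ when $x<l/2$. Given $\xi\in\mathcal H^{\rm c}$ I would set $\binom{u(x)}{u(l-x)}:=T(x)^{-1}\xi(x)$ for $x\in(0,l/2)$, read off $u$ on $(0,l)$, and apply the same matrix identity in reverse to obtain $\|u\|_{\mathcal H}=\|\xi\|_{\mathcal H^{\rm c}}$. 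This gives $u\in\mathcal H$ with $Vu=\xi$, so $W^{\rm c}=V$ is surjective as well as isometric, i.e., unitary.

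The only point that requires a little care is the behaviour of $T(x)^{-1}$ near the endpoint $x=l/2$, where the two columns of $T$ coalesce; but this concerns a single point, only measurability of $x\mapsto T(x)^{-1}$ on $(0,l/2)$ is needed, and the isometry identity guarantees the resulting $u$ lies in $L_2(0,l)$ without any pointwise estimate on $T^{-1}$. Everything else is routine bookkeeping with identities already assembled in the proofs of Lemmas \ref{Lemma inner product} and \ref{Lemma base in value space}.
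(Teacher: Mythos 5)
Your proposal is correct and follows essentially the same route as the paper: both rest on the identity $T^*(x)G^{-1}(\omega_x)T(x)=\rho(x)^{-1}I$ from Lemma \ref{Lemma inner product} and on identifying $W^{\rm c}$ with the explicit map $u\mapsto T(x)\begin{pmatrix}u(x)\\ u(l-x)\end{pmatrix}$ by continuity on the dense set $\mathcal U_{L_0}$. The only cosmetic difference is that you establish surjectivity by constructing the preimage $T^{-1}(x)\xi(x)$ directly, while the paper writes the very same piecewise formula as the adjoint ${W_0^{\rm c}}^*\hat y$ and concludes via triviality of its kernel; the underlying computation is identical.
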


\begin{proof}
For every $u\in\mathcal U_{L_0}$ and $\hat y\in\mathcal H^{\rm c}$ using the equality
$$
    T^*(x_{\omega})G^{-1}(\omega)T(x_{\omega})=\frac{I}{\rho(x_{\omega})}
$$
we get:
\begin{multline*}
    (W_0^{\rm c}u,\hat y)_{\mathcal H^{\rm c}}=\int_0^{\frac l2}(G^{-1}(\omega_x)\hat u(x),\hat y(x))_{\mathbb C^2}\rho(x)dx
    \\
    =\int_0^{\frac l2}\left(G^{-1}(\omega_x)T(x)
    \begin{pmatrix}
          u(x) \\
          u(l-x) \\
    \end{pmatrix}
    ,\hat y(x)
    \right)_{\mathbb C^2}\rho(x)dx
    \\
    =
    \int_0^{\frac l2}\left(\frac{T^{-*}(x)}{\rho(x)}
    \begin{pmatrix}
          u(x) \\
          u(l-x) \\
    \end{pmatrix}
    ,\hat y(x)
    \right)_{\mathbb C^2}\rho(x)dx
    \\
    =\int_0^{\frac l2}\left(
    \begin{pmatrix}
          u(x) \\
          u(l-x) \\
    \end{pmatrix}
    ,T^{-1}(x)\hat y(x)
    \right)_{\mathbb C^2}dx=\int_0^lu(x)\overline{y(x)}dx,
\end{multline*}
where
$$
    y(x)=
    \left\{
    \begin{array}{ll}
        (T^{-1}(x)\hat y(x))_1,&x\in(0,\frac l2),
        \\
        (T^{-1}(l-x)\hat y(l-x))_2,&x\in(\frac l2,l).
    \end{array}
    \right.
$$
Note that $\hat y\in L_2((0,\frac l2),G^{-1}(\omega)\rho(x_{\omega})dx_{\omega},\mathbb C^2)$ means that
$$
    \int_0^{\frac l2}(G^{-1}(\omega)\hat y(x_{\omega}),\hat y(x_{\omega}))\rho(x_{\omega})dx_{\omega}=\int_0^{\frac l2}\|T^{-1}(x_{\omega})\hat y(x_{\omega})\|^2dx_{\omega}=\|y\|^2_{L_2(0,l)}.
$$
Therefore $y={W_0^{\rm c}}^*\hat y$. If $y=0$, then $T^{-1}\hat y=0$ and hence $\hat y =0$, so ${\rm Ker}\,{W_0^{\rm c}}^*=\{0\}$. This means that $\overline{{\rm Ran}\,W_0^{\rm c}}=\mathcal H^{\rm c}$ and the operator $\overline{W_0^{\rm c}}$ is unitary. Moreover, the operator which acts by the rule
$$
    u(x)\mapsto T(x)
    \begin{pmatrix}
          u(x) \\
          u(l-x) \\
    \end{pmatrix}
$$
from $\mathcal H$ to $\mathcal H^{\rm c}$ is isometric and coincides with $W_0^{\rm c}$ on $\mathcal U_{L_0}$. This implies that it is equal to $\overline{W_0^{\rm c}}$. Therefore \eqref{W-c} holds. The lemma is proved.
\end{proof}

Define the operator
$$
    L_0^{\rm c}=W^{\rm c}L_0{W^{\rm c}}^*
$$
in the space $\mathcal H^{\rm c}$. Owing to unitarity of $W^{\rm c}$,
\begin{multline*}
    {\rm Graph}\,{L_0^{\rm c}}^*=\overline{{\rm Graph\,}(W^{\rm c}L_0^*|_{\mathcal U_{L_0}}{W^{\rm c}}^*)}
    \\
    =\overline{\{(W^{\rm c}u^h(T),-W^{\rm c}u^{h_{tt}}(T)),h\in\mathcal M,T\geqslant0\}}
    \\
    =\overline{\{(\widehat{u^h(T)},-\widehat{u^{h_{tt}}(T)}),h\in\mathcal M,T\geqslant0\}}.
\end{multline*}
The ``outer observer'' can construct the graph of the operator ${L_0^{\rm c}}^*$ in this form using boundary control. This operator will be a differential operator of the second order, and one will be able to recover the original $L_0^*$ from it.

\begin{Theorem}
The operator ${L_0^{\rm c}}^*$ is defined on the domain
$$
    {\rm Dom}\,{L_0^{\rm c}}^*=\left\{\hat u(x)=T(x)
    \begin{pmatrix}
      u(x) \\
      u(l-x) \\
    \end{pmatrix},
    u\in H^2(0,l)
    \right\},
$$
where $T(x)$ is given by the formula \eqref{T} and acts by the rule
$$
    ({L_0^{\rm c}}^*\hat u)(x)=-\hat u''(x)+\hat P(x)\hat u'(x)+\hat Q(x)\hat u(x),
$$
where
\begin{equation}\label{P-hat}
    \hat P(x)=-2T(x){T^{-1}}'(x),
\end{equation}
\begin{equation}\label{Q-hat}
    \hat Q(x)=T(x)Q(x)T^{-1}(x)-T(x){T^{-1}}''(x),
\end{equation}
$$
    Q(x)=
    \begin{pmatrix}
          q(x) & 0 \\
          0 & q(l-x) \\
    \end{pmatrix}.
$$
Besides that,
$$
    {\rm Dom\,}L_0^{\rm c}=\{\hat u\in{\rm Dom\,}{L_0^{\rm c}}^*:\hat u(0)=\hat u'(0)=0\},
$$
$$
    {\rm Dom\,}(W^{\rm c}L{W^{\rm c}}^*)=\{\hat u\in{\rm Dom\,}{L_0^{\rm c}}^*:\hat u(0)=0\}.
$$
\end{Theorem}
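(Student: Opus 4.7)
The plan is to exploit unitarity of $W^{\rm c}$ to reduce everything to a direct computation on vector-valued functions. Since $W^{\rm c}$ is unitary by Lemma \ref{Lemma W-c unitary}, the operator ${L_0^{\rm c}}^*$ coincides with $W^{\rm c} L_0^* {W^{\rm c}}^*$ (the adjoint of the unitary conjugate equals the unitary conjugate of the adjoint). Hence ${\rm Dom}\,{L_0^{\rm c}}^* = W^{\rm c}({\rm Dom}\,L_0^*) = W^{\rm c}(H^2(0,l))$, and by the explicit formula \eqref{W-c} this image is precisely the set $\{T(x)(u(x),u(l-x))^T : u\in H^2(0,l)\}$, establishing the first claim.

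For the action, I would fix $u \in H^2(0,l)$, set $U(x):=(u(x),u(l-x))^T$, and compute the image of $L_0^* u$ under $W^{\rm c}$ in vector form. Using $\tfrac{d^2}{dx^2}u(l-x)=u''(l-x)$, the two coordinates of $W^{\rm c}(L_0^* u)(x)$ assemble into
\[
  T(x)\bigl[-U''(x)+Q(x)U(x)\bigr],\qquad Q(x)=\mathrm{diag}(q(x),q(l-x)).
\]
Now substitute $U = T^{-1}\hat u$ and differentiate: $U'' = {T^{-1}}''\hat u + 2{T^{-1}}'\hat u' + T^{-1}\hat u''$. Multiplying by $T$ and collecting terms gives
\[
  -\hat u'' \;-\; 2T{T^{-1}}'\,\hat u' \;+\; \bigl(TQT^{-1} - T{T^{-1}}''\bigr)\hat u,
\]
which matches the definitions \eqref{P-hat}, \eqref{Q-hat} of $\hat P$ and $\hat Q$.

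For the boundary conditions, the key observation is that $T(0)$ is nonsingular: its determinant is $\rho(0)^{-2}\,\overline{e_1(0)e_2(l)-e_1(l)e_2(0)}$, which is nonzero because by Lemma \ref{Lemma not many zeros} no nonzero element of $\mathrm{Ker}\,L_0^*$ vanishes at both endpoints. Hence $\hat u(0) = T(0)(u(0),u(l))^T=0$ iff $u(0)=u(l)=0$, giving the description of ${\rm Dom}\,(W^{\rm c}L{W^{\rm c}}^*)$ as the image of ${\rm Dom}\,L$. Differentiating $\hat u(x)=T(x)(u(x),u(l-x))^T$ at $x=0$ (and noting the minus sign from the reflection) yields $\hat u'(0)=T'(0)(u(0),u(l))^T+T(0)(u'(0),-u'(l))^T$; under the already established $u(0)=u(l)=0$, vanishing of $\hat u'(0)$ is equivalent to $u'(0)=u'(l)=0$, giving the description of ${\rm Dom}\,L_0^{\rm c}$.

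The only delicate point is that $T(x)$ degenerates at $x=l/2$, where its two columns coincide, so the coefficient matrices $\hat P(x),\hat Q(x)$ have a singularity at that interior point; this must be handled by verifying that the differential expression represents ${L_0^{\rm c}}^*$ in the weighted $L_2$-space $\mathcal H^{\rm c}$ (where the weight $G^{-1}(\omega)\rho$ is precisely what compensates the degeneracy in the pairing computed in Lemma \ref{Lemma W-c unitary}). Since $\{l/2\}$ has Lebesgue measure zero, the identification on the complement together with unitarity of $W^{\rm c}$ closes the argument, and the boundary analysis at $x=0$ goes through because $T$ is smooth and invertible on a neighborhood of $0$.
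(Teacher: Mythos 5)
Your proposal is correct and follows essentially the same route as the paper: use unitarity of $W^{\rm c}$ to identify ${L_0^{\rm c}}^*$ with $W^{\rm c}L_0^*{W^{\rm c}}^*$, apply the explicit formula \eqref{W-c} to $L_0^*u$, substitute $T^{-1}\hat u$ and differentiate to read off $\hat P$ and $\hat Q$. You additionally spell out the boundary-condition part (invertibility of $T(0)$ via Lemma \ref{Lemma not many zeros} and the sign from the reflection), which the paper dismisses as "easily found from the domains of $L_0^*$, $L_0$, and $L$" — a welcome bit of extra detail, not a departure in method.
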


\begin{proof}
For $u\in{\rm Dom\,}L_0^*$ we have:
$$
    \hat u(x)=(W^{\rm c}u)(x)=T(x)
    \begin{pmatrix}
          u(x) \\
          u(l-x) \\
    \end{pmatrix},
$$
\begin{multline*}
    ({L_0^{\rm c}}^*\hat u)(x)=\widehat{L_0^*u}(x)=T(x)
    \begin{pmatrix}
          -u''(x)+q(x)u(x) \\
          -u''(l-x)+q(l-x)u(l-x) \\
    \end{pmatrix}
    \\
    =T(x)
    \left(
    -
    \begin{pmatrix}
          u(x) \\
          u(l-x) \\
    \end{pmatrix}''
    +Q(x)
    \begin{pmatrix}
          u(x) \\
          u(l-x) \\
    \end{pmatrix}
    \right)
    \\
    =T(x)(-(T^{-1}(x)\hat u(x))''+Q(x)T^{-1}(x)\hat u(x))
    \\
    =-\hat u''(x)+\hat P(x)\hat u'(x)+\hat Q(x)\hat u(x).
\end{multline*}
Domains of the operators ${L_0^{\rm c}}^*$, $L_0^{\rm c}$, and $W^{\rm c}L{W^{\rm c}}^*$ can be easily found from the domains of the operators $L_0^*$, $L_0$, and $L$, respectively.
\end{proof}

\begin{Remark}
The domain of the operator ${L_0^{\rm c}}^*$ is contained in the linear set
$$
    \left\{\hat u\in H^2\left(\left[0,\frac l2\right],\mathbb C^2\right):\hat u\left(\frac l2\right)=\hat u_0
    \begin{pmatrix}
        \overline{e_1(\frac l2)} \\
        \overline{e_2(\frac l2)} \\
    \end{pmatrix},
    \hat u_0\in\mathbb C,\hat u'\left(\frac l2\right)=0
    \right\}.
$$
\end{Remark}

\begin{proof}
Since $T\in C^{\infty}[0,\frac l2]$,
$$
    T(x)
    \begin{pmatrix}
          u(x) \\
          u(l-x) \\
    \end{pmatrix}
    \in H^2([0,\frac l2],\mathbb C^2)
$$
holds for $u\in H^2(0,l)$. The vector-valued function
$v(x)=
\begin{pmatrix}
  u(x) \\
  u(l-x) \\
\end{pmatrix}
$,
besides belonging to $H^2([0,\frac l2],\mathbb C^2)$, satisfies two other conditions:
$v(\frac l2)=v_0
\begin{pmatrix}
  1 \\
  1 \\
\end{pmatrix}$
and
$v'(\frac l2)=v_1
\begin{pmatrix}
  1 \\
 -1 \\
\end{pmatrix}$
with some $v_0,v_1\in\mathbb C$. These conditions after multiplication by the matrix $T(x)$ turn into conditions
$\hat u(\frac l2)=\hat u_0
\begin{pmatrix}
  \overline{e_1(\frac l2)} \\
  \overline{e_2(\frac l2)} \\
\end{pmatrix}$
and $\hat u'(\frac l2)=0$, with $\hat u_0\in\mathbb C$. The first follows from substitution, for the second we used symmetry of the function $\rho(x)$ with respect to the point $\frac l2$.

The matrix $T(x)$ degenerates at the point $\frac l2$, hence $T^{-1}(x)\notin C^{\infty}[0,\frac l2]$ and only inclusion, not equality, of linear sets takes place.
\end{proof}

\subsection{The inverse problem}
The ``outer observer'' can recover the potential $q$ after construction of the wave model from the inverse data. But recovering is possible up to changing $q(x)$ to $q(l-x)$, which is natural: for these potentials the data will be the same. The wave model appears as a second order differential operator on the interval $(0,\frac l2)$, which acts on vector-valued functions with two components. Thus the coefficients $\hat P(x)$ and $\hat Q(x)$ are known. Note that the Gram matrix $G(\omega_x)$ and the density of the measure $\rho(x)=\lim\limits_{t\to+0}\frac{(P_{\omega_{x}(t)}e,e)}{2t}$ are determined in the ``wave'' terms and hence are available to the ``outer observer''.

To find the potential it is enough to know $\hat P$ and $\hat Q$. The equation $-2TT^{-1}=\hat P$ is equivalent to the equation ${T^{-1}}'=-\frac12T^{-1}\hat P$ on the function $T^{-1}$. Let $M(x)$ denote its fundamental (matrix) solution:
$$
    \begin{array}{l}
        M'(x)=-\frac12M(x)\hat P(x),
        \\
        M(0)=I.
    \end{array}
$$
Then $T^{-1}(x)=T_0^{-1}M(x)$ with some constant invertible matrix $T_0$ and $T(x)=M^{-1}(x)T_0$. Equation \eqref{Q-hat} reads
$$
    \hat Q=M^{-1}T_0QT_0^{-1}M-M^{-1}T_0(T_0M)'',
$$
which is equivalent to
$$
    M\hat QM^{-1}=T_0QT_0^{-1}-M''M^{-1},
$$
$$
    Q(x)=
        \begin{pmatrix}
          q(x) & 0 \\
          0 & q(l-x) \\
        \end{pmatrix}
    =T_0^{-1}(M(x)\hat Q(x)M^{-1}(x)+M''(x)M^{-1}(x))T_0.
$$
We see that the values of the potential $q$ at the points symmetric with respect to $\frac l2$ can be found as the eigenvalues of the matrix
$$
    M(x)\hat Q(x)M^{-1}(x)+M''(x)M^{-1}(x),
$$
and one can find this matrix from $\hat P$ and $\hat Q$. So we see that the potential can be recovered up to reflection from the middle of the interval.

\end{document}